\title{\bf
Asymptotic Convertibility of Entanglement:\\
A General Approach to Entanglement Concentration and Dilution
}
\author[1]{Yong Jiao\thanks{shouyu@quest.is.uec.ac.jp}}
\author[2]{Eyuri Wakakuwa\thanks{wakakuwa@quest.is.uec.ac.jp}}
\author[2]{Tomohiro Ogawa\thanks{ogawa@is.uec.ac.jp}}
\affil[1]{
Graduate School of Information Systems, University of Electro-Communications,
1-5-1 Chofugaoka, Chofu-shi, Tokyo, 182-8585, Japan.
}
\affil[2]{
Graduate School of Informatics and Engineering, University of Electro-Communications,
1-5-1 Chofugaoka, Chofu-shi, Tokyo, 182-8585, Japan.
}
\date{\today}
\newtheorem{theorem}{Theorem}
\newtheorem{lemma}{Lemma}
\newtheorem{definition}{Definition}
\newtheorem{corollary}{Corollary}
\newtheorem{proposition}{Proposition}
\newtheorem{remark}{Remark}
\def\ve{\varepsilon}
\def\a{\alpha}
\def\r{\rho}
\def\s{\sigma}
\def\g{\gamma}
\def\N{\mathbb{N}}
\def\R{\mathbb{R}}
\def\X{\mathbb{X}}
\def\Y{\mathbb{Y}}
\def\x{\mathcal{X}}
\def\y{\mathcal{Y}}
\def\H{\mathcal{H}}
\def\ri{{n\rightarrow\infty}}
\def\uh{\underline H}
\def\oh{\overline H}
\def\ud{\underline D}
\def\od{\overline D}
\def\uc{\underline C}
\def\oc{\overline C}
\def\whr{\widehat\rho}
\def\whs{\widehat\sigma}
\def\whi{\widehat I}
\def\Tr{\mathrm{Tr}}
\def\whp{\widehat\psi}
\def\whP{\widehat\Phi}
\def\whph{\widehat\phi}
\def\bn#1{\{ #1 \}}
\def\ni{_{n=1}^{\infty}}
\def\nn{\nonumber \\}
\def\L{\mathcal{L}}
\def\F{\mathcal{F}}
\def\D{\mathcal{D}}
\def\pure#1{\ket{#1}\!\!\bra{#1}}
\def\norm#1{\Vert #1 \Vert}
\newenvironment{textmath}{\(\displaystyle}{\)}
\begin{document}
\maketitle

\abstract{
We consider asymptotic convertibility of an arbitrary sequence of bipartite pure states
into another by local operations and classical communication (LOCC).
We adopt an information-spectrum approach to address cases
where each element of the sequences is not necessarily in tensor power of
a bipartite pure state.  We derive necessary and sufficient conditions for the
LOCC convertibility of one sequence to another in terms of spectral entropy
rates of entanglement of the sequences.  Based on these results, we also provide
simple proofs for previously known results on the optimal rates of entanglement
concentration and dilution of general sequences of pure states.
}


\maketitle

\section{Introduction}

An entangled quantum state shared between two distant parties is used as a
resource for performing nonlocal quantum information processing. When a state is
not in a desired form as a resource, we may need to transform it by local
operations and classical communication (LOCC) to a state with the desired
form. Well-known examples of such tasks are {\em entanglement concentration and
dilution} \cite{BB96}. Entanglement concentration is a task to obtain a maximally
entangled state from many copies of a non-maximally entangled state by LOCC, and
entanglement dilution is its inverse process. When the initial state is copies
of a bipartite pure state, the optimal rates of entanglement concentration and
dilution are asymptotically equal to the entanglement entropy \cite{BB96}.

For cases where the initial and target states are not necessarily in tensor
power of a bipartite state, the {\em information-spectrum method} has been applied to
analyze entanglement concentration \cite{H06,BD08} and entanglement dilution
\cite{BD08}. Originally, the information-spectrum method was developed in
classical information theory by Han and Verd\'u \cite{HV93,VH94,H02} and has
been extended to quantum information theory by Nagaoka and Hayashi
\cite{N98,HN03,NH07}. In the setting of the information-spectrum method, the
optimal rates of entanglement concentration and dilution are obtained in terms of
{\em spectral entropy rates} \cite{H06,BD08}.

In this paper, we consider a more general situation in which an arbitrary sequence of bipartite pure states
$\widehat\psi^{AB}=\{\psi_n^{AB}\}_{n=1}^{\infty}$ is converted into another
$\widehat\phi^{AB}=\{\phi_n^{AB}\}_{n=1}^{\infty}$
asymptotically by a sequence of LOCC protocols $\widehat\L=\{{\mathcal L}_n\}_{n=1}^\infty$.
We require that the trace distance between the final state $\L_n(\psi_n^{AB})$
and the target state $\phi_n^{AB}$ vanishes in the limit of $n\rightarrow\infty$.
We address conditions in which such a conversion is possible.
Contrary to the previous approaches \cite{H06,BD08},
we do not assume that the target state or the initial state is a maximally entangled state.

The main results of this paper are as follows.
As a direct part of the convertibility, it is proved that the initial sequence $\whp^{AB}$
can be converted to the target $\whph^{AB}$ asymptotically
if the inf-spectral entropy of entanglement of $\whp^{AB}$
is larger than the sup-spectral entropy of entanglement of $\whph^{AB}$.
As a converse part, we prove that if $\whp^{AB}$ is convertible to $\whph^{AB}$,
the inf-/sup-spectral entropy of entanglement of $\whp^{AB}$ must be larger than those of $\whph^{AB}$, respectively.
If we restrict $\whph^{AB}$ or $\whp^{AB}$ to be a sequence of maximally entangle states,
our results turn out to be those obtained by Hayashi \cite{H06} and Bowen-Datta \cite{BD08},
regarding the optimal rates of entanglement concentration and dilution.

Our proof of the direct part is based on the theory of classical random number generation
and much simpler than those of \cite{BD08,H06}.
It has been pointed out by Kumagai and Hayashi \cite{KH13NF} that
there are close relations between convertibility of entanglement and classical random number generation
mainly on the second order analysis of convertibility of entanglement in the i.i.d.~setting.
In this paper, we pursue generality of such an idea in the information-spectrum setting
and provide a simple argument for the asymptotic convertibility of entanglement.

This paper is organized as follows. In section 2, we provide definitions of the problem and state the main results.
Proofs of the main results are presented in Section 3 and Section 4. Conclusions are given in section 5.

\section{Main Results}

In this section, we present definitions of the problem and state the main results of this paper.
Let $\H_n^A$ and $\H_n^B$ $(n=1,2,\dots)$ be arbitrary finite-dimensional Hilbert spaces
and consider a general sequence of bipartite systems $\H_n^{AB}=\H_n^A\otimes\H_n^B$ $(n=1,2,\dots)$ composed of them.
Let $\ket{\psi_n^{AB}}$ and $\ket{\phi_n^{AB}}$ in $\H_n^{AB}$ be arbitrary pure states for each $n\in\N$.
For simplicity of the notation, we denote density operators by
$\psi_n^{AB}=\pure{\psi_n^{AB}}$ and $\phi_n^{AB}=\pure{\phi_n^{AB}}$.
For arbitrary density operators $\psi_n^{AB}$,
the reduced density operators are written as
$\psi_n^{A}={\rm Tr}_B[\psi_n^{AB}]$ and $\psi_n^{B}={\rm Tr}_A[\psi_n^{AB}]$.

\subsection{Necessary and Sufficient Conditions for LOCC Convertibility}

For arbitrary sequences of bipartite pure states
$\widehat \psi^{AB}=\{\psi_n^{AB}\}_{n=1}^{\infty}$ and $\widehat \phi^{AB} = \{\phi_n^{AB}\}_{n=1}^{\infty}$,
we seek for conditions under which $\psi_n^{AB}$ can be converted into $\phi_n^{AB}$
by LOCC for each $n$, up to a certain error that vanishes in the limit of $n\rightarrow\infty$.

\begin{definition}\label{ac}
We say that $\widehat\psi^{AB}$ can be asymptotically converted into $\widehat\phi^{AB}$ by LOCC,
if there exists a sequence of LOCC $\L_n$ $(n=1,2,\dots)$ such that
\begin{align}
\lim_{\ri}\norm{\L_n(\psi_n^{AB})-\phi_n^{AB}}_1=0,
\label{eq:001}
\end{align}
where $\norm{\cdot}_1$ is the trace norm defined by $\norm{A}_1={\rm Tr}|A|$ for a operator $A$.

\end{definition}
In this paper, we provide necessary and sufficient conditions for the asymptotic convertibility
of two sequences of bipartite pure states in terms of {\it spectral entropy rates},
which are key ingredients in information-spectrum methods and defined as follows.
Let $\whr=\{\rho_n\}_{n=1}^{\infty}$ be an arbitrary sequence of density operators and
$\whs=\{\sigma_n\}_{n=1}^{\infty}$ be an arbitrary sequence of Hermitian operators.
Then, for each $\ve\in[0,1]$, the spectral divergence rates \cite{NH07} are defined by
\begin{align}
\ud(\varepsilon|\whr||\whs)
&:=\sup\left\{a\bigm| \liminf_{\ri} \Tr \rho_n\{\rho_n-e^{na}\sigma_n>0\}\ge1-\ve\right\},
\label{eq:002} \\
\od(\ve|\whr||\whs)
\label{eq:003}
&:=\inf\left\{a\bigm| \limsup_{\ri} \Tr \rho_n\{\rho_n-e^{na}\sigma_n>0\}\le\ve\right\}.
\end{align}
Here, $\{A>0\}$ denotes the spectral projection corresponding to the positive part of a Hermitian operator $A$.
Specifically, using the spectral decomposition $A=\sum_ka_kE_k$, $\{A>0\}$ is defined by
\begin{align*}
\{A>0\}=\sum_{k:\,a_k>0} E_k.
\end{align*}
With the spectral divergence rates, the spectral entropy rates \cite{NH07,BD06} are defined by
\begin{align}
\label{eq:005}
\uh(\ve|\whr):=-\od(\ve|\whr||\whi), \quad
\oh(\ve|\whr):=-\ud(\ve|\whr||\whi)
\end{align}
for $\ve\in[0,1]$, where $\whi=\{I_n\}_{n=1}^{\infty}$ is the sequence of identity operators.
Especially, for $\ve=0$ we write
\begin{align*}
\uh(\whr)=\uh(0|\whr), \quad
\oh(\whr)=\oh(0|\whr).
\end{align*}

For any general sequences of bipartite pure states $\widehat \psi^{AB}$, consider sequences of reduced states $\widehat\psi^A=\{\psi_n^{A}\}_{n=1}^{\infty}$ and $\widehat\psi^B=\{\psi_n^{B}\}_{n=1}^{\infty}$. Then it is clear that $\widehat\psi^A$ and $\widehat\psi^B$ have the same entropy spectral rates, i.e.,
\begin{align*}
\uh(\widehat\psi^A)=\uh(\widehat\psi^B), \quad
\oh(\widehat\psi^A)=\oh(\widehat\psi^B).
\end{align*}
The main results of this paper are as follows. 

\begin{theorem}[direct part]\label{thm:direct}
Let $\widehat\psi^{AB}=\{\psi_n^{AB}\}_{n=1}^{\infty}$ and $\widehat\phi^{AB}=\{\phi_n^{AB}\}_{n=1}^{\infty}$
be general sequences of bipartite pure states on $\H_n^{AB}$ $(n=1,2,\dots)$. If $\uh(\widehat\psi^A)>\oh(\widehat\phi^A)$ holds, then $\widehat\psi^{AB}$ can be asymptotically converted into $\widehat\phi^{AB}$ by LOCC.
\end{theorem}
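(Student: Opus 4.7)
The strategy is to reduce the asymptotic LOCC convertibility of pure bipartite states to an asymptotic conversion problem for the classical distributions given by their Schmidt coefficients, and then invoke an information-spectrum theorem on classical random number generation.

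First, I would apply the Schmidt decomposition to write $\ket{\psi_n^{AB}}=\sum_i \sqrt{p_n(i)}\,\ket{e_n^i}_A\ket{f_n^i}_B$ and $\ket{\phi_n^{AB}}=\sum_j \sqrt{q_n(j)}\,\ket{g_n^j}_A\ket{h_n^j}_B$. Because the reduced states $\psi_n^A$ and $\phi_n^A$ are diagonal with spectra $\{p_n(i)\}$ and $\{q_n(j)\}$, the hypothesis $\uh(\whp^A)>\oh(\whph^A)$ becomes the purely classical inequality $\uh(\widehat p)>\oh(\widehat q)$ for the distribution sequences $\widehat p=\{p_n\}$ and $\widehat q=\{q_n\}$. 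All spectral quantities in the problem are thereby reduced to classical spectral entropy rates of Schmidt distributions.

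Second, I would invoke the information-spectrum theorem for classical random number generation (Han, Vembu--Verd\'u, Hayashi): under $\uh(\widehat p)>\oh(\widehat q)$, there exists a sequence of stochastic maps $W_n$ on the relevant alphabets such that $\sum_j |(W_n p_n)(j)-q_n(j)|\to 0$ as $\ri$. A convenient construction fixes an intermediate rate $R$ with $\oh(\widehat q)<R<\uh(\widehat p)$ and composes an extraction from $p_n$ to a near-uniform distribution of size $\lfloor e^{nR}\rfloor$ (feasible because $R<\uh(\widehat p)$) with a resolvability step sending the uniform distribution to $q_n$ (feasible because $R>\oh(\widehat q)$). I would then lift $W_n$ to a one-way LOCC protocol: Alice performs a POVM whose outcome statistics on $\psi_n^A$ reproduce $W_n$, and Bob applies conditional isometries that align the post-measurement Schmidt vectors with those of $\ket{\phi_n^{AB}}$. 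Applying the Fuchs--van de Graaf inequality converts the classical total-variation bound into the trace-distance bound $\lim_{\ri}\norm{\L_n(\psi_n^{AB})-\phi_n^{AB}}_1=0$.

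The main obstacle is the coherent lifting in the third step: a many-to-one classical map cannot be implemented unitarily without an entangled ``garbage'' register, so the POVM-based realization produces an intermediate mixed state whose trace distance to the pure target $\phi_n^{AB}$ must be controlled carefully. The sharp information-spectrum estimates of $\uh(\widehat p)$ and $\oh(\widehat q)$---which characterize the asymptotic sizes of the typical supports of $\psi_n^A$ and $\phi_n^A$ up to vanishing mass---furnish exactly the slack needed for these errors to vanish, and this is where the Kumagai--Hayashi-style analogy with classical random number generation is technically decisive, in contrast to the more elaborate concentration/dilution arguments of \cite{H06,BD08}.
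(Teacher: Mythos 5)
Your reduction to the classical problem via Schmidt coefficients and the choice of an intermediate rate $R$ with $\oh(\widehat q)<R<\uh(\widehat p)$ are sound and in the spirit of the paper, but the third step --- lifting the classical map $W_n$ to a one-way LOCC protocol --- is precisely where a genuine gap sits, and you do not close it. Arranging Alice's POVM so that its \emph{outcome statistics} reproduce $W_n$ says nothing about the conditional post-measurement quantum states: for $\L_n(\psi_n^{AB})$ to be close to the pure target $\phi_n^{AB}$ in trace norm, essentially every conditional state must be close to $\ket{\phi_n^{AB}}$, and matching classical statistics cannot guarantee this. Indeed, as a general principle the lifting is false: for any $p_n$ there is a stochastic map sending $p_n$ exactly to $q_n$ (discard the input and sample $q_n$), yet a product state can never be turned into an entangled $\phi_n^{AB}$ by LOCC. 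So any correct lifting must use structural information beyond the distributional statement $W_np_n\approx q_n$; the ``slack'' provided by the information-spectrum estimates does not by itself control the garbage registers you yourself identify as the obstacle.

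The missing idea (and the paper's route) is majorization. Nagaoka's random number generation theorem gives \emph{deterministic} maps $\varphi_n:\x^n\to\y^n$ with $d(Y^n,\varphi_n(X^n))\to0$, and for a deterministic map the induced distribution $\tilde q_n=(\varphi_n)_*p_n$ majorizes $p_n$ (the Kumagai--Hayashi lemma); note that this step would fail for a genuinely stochastic $W_n$, so determinism is essential. By Nielsen's theorem, $\ket{\psi_n^{AB}}$ is then converted \emph{exactly} by LOCC to the pure state $\ket{\tilde\phi_n^{AB}}:=\sum_{y^n}\sqrt{\tilde q_n(y^n)}\,\ket{f_{y^n}^A}\otimes\ket{f_{y^n}^B}$ built on the target's Schmidt basis, which sidesteps the coherent-lifting problem entirely; no intermediate near-uniform state is needed. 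Finally $F(\tilde\phi_n^{AB},\phi_n^{AB})=\sum_{y^n}\sqrt{\tilde q_n(y^n)q_n(y^n)}\to1$ because the variational distance vanishes, and the Fuchs--van de Graaf inequalities give $\norm{\tilde\phi_n^{AB}-\phi_n^{AB}}_1\to0$. With this replacement your outline becomes the paper's proof; without it, the lifting step is unproven and, as stated, not provable from the classical statistics alone.
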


\begin{theorem}[converse part]\label{thm:converse}
Let $\widehat\psi^{AB}=\{\psi_n^{AB}\}_{n=1}^{\infty}$ and $\widehat\phi^{AB}=\{\phi_n^{AB}\}_{n=1}^{\infty}$
be general sequences of bipartite pure states on $\H_n^{AB}$ $(n=1,2,\dots)$.
If $\widehat\psi^{AB}$ can be asymptotically converted into $\widehat\phi^{AB}$ by LOCC, it must hold that $\oh(\ve|\widehat\psi^A)\ge\oh(\ve|\widehat\phi^A)$ and $\uh(\ve|\widehat\psi^A)\ge\uh(\ve|\widehat\phi^A)$ for every $\ve\in [0,1]$.
\end{theorem}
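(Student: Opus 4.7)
My plan is to reduce LOCC convertibility to an approximate majorization of the Alice-side spectra and then translate this into the two spectral-entropy inequalities.

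First, by contractivity of the trace norm under partial trace, the reduced states $\sigma_n^A := \Tr_B[\L_n(\psi_n^{AB})]$ satisfy $\norm{\sigma_n^A - \phi_n^A}_1 \to 0$. Since $|\Tr[\rho_n P_n] - \Tr[\rho'_n P_n]| \le \norm{\rho_n-\rho'_n}_1$ for any projector $P_n$, a direct inspection of the definitions of $\ud$ and $\od$ shows that $\uh(\ve|\cdot)$ and $\oh(\ve|\cdot)$ are invariant under trace-norm-vanishing perturbations of the sequence, so it suffices to establish both inequalities with $\widehat\phi^A$ replaced by $\widehat\sigma^A$.

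Second, I exploit the pure-state branch structure of LOCC. Since $\psi_n^{AB}$ is pure and each LOCC trajectory preserves purity, we may decompose $\L_n(\psi_n^{AB})=\sum_i p_{n,i}\,\psi_{n,i}^{AB}$ with each $\psi_{n,i}^{AB}$ pure. By the Jonathan--Plenio characterization of LOCC-reachable pure-state ensembles, $\lambda^\downarrow(\psi_n^A) \prec \sum_i p_{n,i}\,\lambda^\downarrow(\psi_{n,i}^A)$, where $\lambda^\downarrow$ denotes the eigenvalue vector sorted in decreasing order. The closeness of $\sigma_n^{AB}$ to the pure state $\phi_n^{AB}$ yields, via the Fuchs--van de Graaf inequality and Markov, a subset of branches of total weight tending to $1$ whose fidelity with $\phi_n^{AB}$ tends to $1$; on this subset, each $\lambda^\downarrow(\psi_{n,i}^A)$ is $\ell_1$-close to $\lambda^\downarrow(\phi_n^A)$. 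Combining yields an approximate majorization $\sum_{j\le k}\lambda_j^\downarrow(\psi_n^A) \le \sum_{j\le k}\lambda_j^\downarrow(\phi_n^A) + \delta_n$, uniformly in $k$, with $\delta_n\to 0$.

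Third, I derive both spectral-entropy inequalities from this approximate majorization. For $\uh$: if $\limsup_n \Tr[\phi_n^A\{\phi_n^A>e^{-nb}\}]\le\ve$, then the left side is a top-$k$ partial sum of $\phi_n^A$ for an appropriate $k$, and the approximate majorization forces the corresponding partial sum of $\psi_n^A$ to obey the same bound, giving $\uh(\ve|\widehat\psi^A)\ge\uh(\ve|\widehat\phi^A)$. For $\oh$: if $\phi_n^A$ concentrates mass at least $1-\ve$ on its top $k$ eigenvalues, then $\psi_n^A$ needs at least as many top eigenvalues to do the same (up to the vanishing correction), yielding $\oh(\ve|\widehat\psi^A)\ge\oh(\ve|\widehat\phi^A)$. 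The main obstacle is making these partial-sum-versus-threshold comparisons robust to the $\delta_n$ slack and to the case where $\psi_n^A$ has more eigenvalues than $\phi_n^A$ above a given threshold; a clean route is to pre-smooth $\psi_n^A$ by a doubly-stochastic transport on spectra to a nearby $\widetilde\psi_n^A$ whose spectrum is exactly majorized by that of $\phi_n^A$, after which the exact-majorization arguments apply and the trace-norm stability from step one closes the loop.
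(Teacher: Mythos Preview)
Your approach is genuinely different from the paper's. The paper never works with branch ensembles or approximate majorization: it purifies the LOCC protocol via Naimark isometries so that the output is a pure state $\phi_n'^{ABE_AE_B}$, invokes Uhlmann's theorem to write this approximately as $\phi_n^{AB}\otimes\xi_n^{E_AE_B}$, obtains from Nielsen's theorem a \emph{unital} TP map $\phi_n'^{AE_A}\mapsto\psi_n^A$, and then chains three general lemmas---monotonicity of $\uh(\ve|\cdot),\oh(\ve|\cdot)$ under unital TP maps, continuity under trace-norm perturbations, and a product inequality $\uh(\ve|\widehat\phi^A)\le\uh(\ve|\widehat\phi^A\otimes\widehat\xi^{E_A})$---to conclude. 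Your Jonathan--Plenio/Markov route avoids the isometric description, Uhlmann, and the product lemma, at the price of pushing all the difficulty into Step~3.

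Step~3 as written does not close, and this is a real gap. The quantity $\Tr[\psi_n^A\{\psi_n^A>e^{-nb}\}]$ is the top-$k'$ partial sum of $\lambda^\downarrow(\psi_n^A)$ with $k'=|\{j:\lambda_j(\psi_n^A)>e^{-nb}\}|$, and this $k'$ need not equal the $k$ coming from $\phi_n^A$; so ``the approximate majorization forces the corresponding partial sum of $\psi_n^A$ to obey the same bound'' is not valid at the same threshold. The ``at least as many top eigenvalues'' heuristic for $\oh$ has the same defect. Your pre-smoothing fix is also off: a doubly-stochastic transport produces $\widetilde\psi_n^A\prec\psi_n^A$, which is the wrong direction, and promoting $\delta_n$-approximate majorization to exact majorization at $\ell_1$-cost $O(\delta_n)$ is itself a nontrivial lemma you have not supplied. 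The clean repair is a \emph{threshold shift}: for $a<b$ one has $k'_a:=|\{j:\lambda_j(\psi_n^A)>e^{-na}\}|\le e^{na}$, and splitting the top-$k'_a$ sum of $\lambda^\downarrow(\phi_n^A)$ into the part above $e^{-nb}$ and the remainder gives
\[
\Tr\bigl[\psi_n^A\{\psi_n^A>e^{-na}\}\bigr]\;\le\;\Tr\bigl[\phi_n^A\{\phi_n^A>e^{-nb}\}\bigr]+e^{-n(b-a)}+\delta_n,
\]
together with the analogous reverse estimate for $\liminf$. Letting $n\to\infty$ and then $a\uparrow b$ (resp.\ $b\downarrow a$) yields $\uh(\ve|\widehat\psi^A)\ge\uh(\ve|\widehat\phi^A)$ and $\oh(\ve|\widehat\psi^A)\ge\oh(\ve|\widehat\phi^A)$. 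With this fix your strategy goes through.
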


\subsection{Entanglement concentration and dilution}
In this section, we use the above theorems to provide simple proofs of known results on the optimal rates of entanglement concentration and entanglement dilution for general sequences of bipartite pure states.

Let $\{M_n\}_{n=1}^\infty$ be an arbitrary sequence of natural numbers, and  let $|\Phi_{M_n}\rangle\in\H^{AB}_n$ be a maximally entangled state with Schmidt rank $M_n$ for each $n$. As a shorthand notation, we denote $\Phi_{M_n}^{AB}=\pure{\Phi_{M_n}}$. Noting that $\Phi_{M_n}^A=\Tr_B[\Phi_{M_n}^{AB}]$ and $\Phi_{M_n}^B=\Tr_A[\Phi_{M_n}^{AB}]$ are the maximally mixed states with Schmidt rank $M_n$, it is straightforward to verify that
\begin{align}
\uh(\widehat\Phi^A)=\liminf_{n\rightarrow\infty}\frac{1}{n}\log{M_n}, \quad
\oh(\widehat\Phi^A)=\limsup_{n\rightarrow\infty}\frac{1}{n}\log{M_n} \label{eq:uhohPhi}
\end{align}
for $\widehat \Phi^{A}=\{\Phi_{M_n}^{A}\}_{n=1}^{\infty}$.

\subsubsection{Entanglement Concentration}

Entanglement concentration is a task for two distant parties to obtain a sequence of maximally entangled states $\widehat \Phi^{AB}$
from a sequence of bipartite pure states $\whp^{AB}$ by LOCC.
\begin{definition}[Entanglement concentration rate] \label{de}
For a sequence $\whp^{AB}=\{\psi_n^{AB}\}_{n=1}^{\infty}$,
a rate $R$ is said to be achievable if there exists a sequence of natural numbers $\{M_n\}_{n=1}^\infty$
such that $\whp^{AB}$ can be asymptotically converted into $\widehat{\Phi}^{AB}=\{\Phi_{M_n}^{AB}\}_{n=1}^{\infty}$ by LOCC and
\begin{align*}
\liminf_{\ri}\frac{1}{n}\log M_n\ge R
\end{align*}
holds. The entanglement concentration rate, or distillable entanglement \cite{BD08}, of a sequence $\whp^{AB}$ is defined by
\begin{align}
R(\whp^{AB}):=\sup\Set{R|\text{$R$ is achievable}}.
\end{align}
\end{definition}

\begin{proposition}[{Hayashi \cite[Theorem 1]{H06}}, {Bowen-Datta \cite[Theorem 3]{BD08}}]\label{ha}
For a sequence of bipartite pure states $\whp^{AB}=\{\psi_n^{AB}\}_{n=1}^{\infty}$, we have
\begin{align}
R(\whp^{AB})=\uh(\whp^A).\label{eq:distillation}
\end{align}
\end{proposition}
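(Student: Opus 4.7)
The plan is to derive Proposition \ref{ha} directly from the two main theorems by specializing the target (or source) sequence to a sequence of maximally entangled states and invoking the explicit formulas \eqref{eq:uhohPhi} for their spectral entropy rates.

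For the achievability bound $R(\whp^{AB})\ge\uh(\whp^A)$, I would fix an arbitrary $R<\uh(\whp^A)$ and set, e.g., $M_n=\lceil e^{nR}\rceil$. Then \eqref{eq:uhohPhi} gives $\oh(\widehat\Phi^A)=\limsup_{\ri}\tfrac{1}{n}\log M_n=R<\uh(\whp^A)$, so Theorem \ref{thm:direct} applied to $\widehat\psi^{AB}$ and $\widehat\Phi^{AB}=\{\Phi_{M_n}^{AB}\}\ni$ produces a sequence of LOCC protocols witnessing asymptotic convertibility $\whp^{AB}\to\widehat\Phi^{AB}$. Since also $\liminf_{\ri}\tfrac{1}{n}\log M_n=R$, the rate $R$ is achievable in the sense of Definition \ref{de}. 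Taking the supremum over such $R$ gives $R(\whp^{AB})\ge\uh(\whp^A)$.

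For the converse bound $R(\whp^{AB})\le\uh(\whp^A)$, I would suppose $R$ is achievable and let $\{M_n\}\ni$ be the corresponding sequence of Schmidt ranks for which $\whp^{AB}$ is asymptotically convertible into $\widehat\Phi^{AB}$ and $\liminf_{\ri}\tfrac{1}{n}\log M_n\ge R$. Apply Theorem \ref{thm:converse} with $\ve=0$ to obtain $\uh(\whp^A)\ge\uh(\widehat\Phi^A)$, and combine with the identity $\uh(\widehat\Phi^A)=\liminf_{\ri}\tfrac{1}{n}\log M_n$ from \eqref{eq:uhohPhi} to conclude $\uh(\whp^A)\ge R$. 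Taking the supremum over all achievable $R$ then yields $R(\whp^{AB})\le\uh(\whp^A)$.

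The argument is essentially a bookkeeping exercise once the two main theorems are in hand; there is no real obstacle. The only subtlety worth flagging is that Theorem \ref{thm:direct} requires the strict inequality $\oh(\widehat\Phi^A)<\uh(\whp^A)$, so the endpoint $R=\uh(\whp^A)$ is not directly reached by a single construction; this is harmless because Definition \ref{de} takes a supremum over achievable rates, which absorbs the approximation $R\uparrow\uh(\whp^A)$.
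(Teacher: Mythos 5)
Your proposal is correct and follows essentially the same route as the paper: specialize the target sequence to maximally entangled states, use \eqref{eq:uhohPhi} together with Theorem~\ref{thm:direct} for achievability of any $R<\uh(\whp^A)$, and Theorem~\ref{thm:converse} (at $\ve=0$) for the converse. Your use of $M_n=\lceil e^{nR}\rceil$ and the explicit remark about the endpoint being absorbed by the supremum are minor refinements of the paper's argument, not a different approach.
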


\begin{proof}
We apply Theorem \ref{thm:direct} and Theorem \ref{thm:converse} regarding the target state $|\phi_n^{AB}\rangle$ as $\ket{\Phi_{M_n}}$.
From Theorem \ref{thm:direct} and \eqref{eq:uhohPhi}, $\whp^{AB}$ can be asymptotically converted into $\whP^{AB}$ by LOCC
if $M_n=e^{nR}$ and $\uh(\whp^A)>\oh(\whP^A)=R$. Thus a rate $R$ is achievable if $\uh(\whp^A)>R$.
Conversely, suppose that a rate $R$ is achievable.
By Definition \ref{de}, there exists a sequence $\whP^{AB}=\{\Phi_{M_n}^{AB}\}_{n=1}^\infty$ such that $\widehat\psi^{AB}$ can be asymptotically converted into $\whP^{AB}$
and \begin{textmath}\liminf_{\ri}\frac{1}{n}\log M_n\ge R \end{textmath}.
Then from Theorem \ref{thm:converse} and \eqref{eq:uhohPhi}, it must hold that
\begin{align*}
\uh(\whp^A)\ge\uh(\whP^A)=\liminf_{\ri}\frac{1}{n}\log M_n\ge R.
\end{align*}
Thus we obtain \eqref{eq:distillation}.
\end{proof}


\subsubsection{Entanglement Dilution}
Entanglement dilution is a  task for two distant parties to convert a sequence of maximally entangled states $\whP^{AB}$ into a sequence of bipartite pure states
$\whph^{AB}$ asymptotically by LOCC. 
\begin{definition}[Entanglement dilution rate] \label{ed}For a sequence $\whph^{AB}=\{\phi_n^{AB}\}_{n=1}^{\infty}$,
a rate $R$ is said to be achievable if there exists a sequence of natural numbers $\{M_n\}_{n=1}^\infty$
such that $\widehat{\Phi}^{AB}=\{\Phi_{M_n}^{AB}\}_{n=1}^{\infty}$ can be asymptotically converted into $\whph^{AB}$ by LOCC and
\begin{align*}
\limsup_{\ri}\frac{1}{n}\log M_n\le R
\end{align*}
holds. The entanglement dilution rate, or entanglement cost \cite{BD08}, of a sequence $\whph^{AB}$ is defined by
\begin{align}
R^*(\whph^{AB}):=\inf\Set{R|\text{$R$ is achievable}}.
\end{align}
\end{definition}

\begin{proposition}[{Bowen-Datta \cite[Theorem 4 ] {BD08}}]\label{bd}
For a sequence of bipartite pure states $\whph^{AB}=\{\phi_n^{AB}\}_{n=1}^{\infty}$, we have
\begin{align}
R^*(\whph^{AB})=\oh(\whph^A).\label{eq:dilution}
\end{align}
\end{proposition}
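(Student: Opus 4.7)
The plan is to mirror the proof of Proposition \ref{ha}, but with the roles of source and target sequences swapped, and with $\oh$ in place of $\uh$. The key enabler is again the identity \eqref{eq:uhohPhi} for maximally entangled sequences, which lets us tune both spectral entropy rates of $\whP^{A}$ to any prescribed value by choosing $M_n$ appropriately.

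For the direct inequality $R^*(\whph^{AB})\le\oh(\whph^A)$, I would fix an arbitrary $R>\oh(\whph^A)$ and choose $M_n=\lceil e^{nR}\rceil$. By \eqref{eq:uhohPhi}, this gives $\uh(\whP^A)=\oh(\whP^A)=R$, so the hypothesis $\uh(\whP^A)>\oh(\whph^A)$ of Theorem \ref{thm:direct} is satisfied, providing an LOCC conversion of $\whP^{AB}$ into $\whph^{AB}$. Since $\limsup_n \frac{1}{n}\log M_n=R$, the rate $R$ is achievable in the sense of Definition \ref{ed}, and taking the infimum over such $R$ yields $R^*(\whph^{AB})\le\oh(\whph^A)$.

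For the converse inequality $R^*(\whph^{AB})\ge\oh(\whph^A)$, I would take any achievable $R$ and let $\{M_n\}_{n=1}^\infty$ be a witnessing sequence, so that $\whP^{AB}=\{\Phi_{M_n}^{AB}\}_{n=1}^\infty$ can be asymptotically converted into $\whph^{AB}$ and $\limsup_n \frac{1}{n}\log M_n\le R$. Applying Theorem \ref{thm:converse} at $\ve=0$ gives $\oh(\whP^A)\ge\oh(\whph^A)$, and combining this with the second identity in \eqref{eq:uhohPhi} produces $\oh(\whph^A)\le\limsup_n \frac{1}{n}\log M_n\le R$. Taking the infimum over achievable $R$ gives $\oh(\whph^A)\le R^*(\whph^{AB})$, which together with the direct inequality yields \eqref{eq:dilution}.

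There is essentially no obstacle here beyond bookkeeping: Theorems \ref{thm:direct} and \ref{thm:converse} do all of the real work, and the only subtlety is the strict-versus-weak inequality in the direct part, which is handled by approaching $\oh(\whph^A)$ from above with a family of rates $R$ and using the infimum in Definition \ref{ed}.
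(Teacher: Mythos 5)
Your proposal is correct and follows essentially the same route as the paper: apply Theorem \ref{thm:direct} with $M_n\approx e^{nR}$ and \eqref{eq:uhohPhi} to show every $R>\oh(\whph^A)$ is achievable, and apply Theorem \ref{thm:converse} (at $\ve=0$) together with \eqref{eq:uhohPhi} to bound any achievable rate from below by $\oh(\whph^A)$. Your use of $\lceil e^{nR}\rceil$ and the explicit infimum bookkeeping is if anything slightly more careful than the paper's wording, but the argument is the same.
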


\begin{proof}
We apply Theorem \ref{thm:direct} and Theorem \ref{thm:converse} by taking the initial state $|\psi_n^{AB}\rangle$ as $\ket{\Phi_{M_n}}$.
From Theorem \ref{thm:direct} and \eqref{eq:uhohPhi}, $\whP^{AB}$ can be asymptotically converted into $\whph^{AB}$
if $M_n=e^{nR}$ and $R=\uh(\whP^A)>\oh(\whph^A)$. Thus a rate $R$ is achievable if $R>\oh(\whph^A)$.
Conversely, suppose that a rate $R$ is achievable. By Definition \ref{ed},
there exists a sequence $\whP^{AB}=\{\Phi_{M_n}\}_{n=1}^\infty$ such that $\whP^{AB}$ can be asymptotically converted into $\widehat\phi^{AB}$
and \begin{textmath}\limsup_{\ri}\frac{1}{n}\log M_n\le R\end{textmath}.
From Theorem \ref{thm:converse} and \eqref{eq:uhohPhi}, it must hold that
\begin{align*}
R\ge\limsup_{\ri}\frac{1}{n}\log M_n=\oh(\whP^A)\geq\oh(\whph^A).
\end{align*}
Thus we obtain (\ref{eq:dilution}).
\end{proof}


\section{Direct Part}
\label{sec:direct}

In this section, we give a proof of Theorem~\ref{thm:direct} using known results on classical random number generations.

\subsection{Random Number Generation and Majorization}

Let us first review the information-spectrum approach for random number generation \cite{H02}, introducing the spectral entropy rates of classical random variables.
For an arbitrary sequence of real valued random variables $\{Z_n\}_{n=1}^{\infty}$, we define the {\em limit superior and inferior in probability} by
\begin{align*}
\text{p\,-}\limsup_{\ri}Z_n := \inf\Set{ \a | \lim_{\ri} \Pr\{Z_n>\a\}=0 },\\
\text{p\,-}\liminf_{\ri}Z_n := \sup\Set{ \a | \lim_{\ri} \Pr\{Z_n<\a\}=0 }.
\end{align*}
Let $\X=\{X^n\}_{n=1}^{\infty}$ an arbitrary sequence of random variables, called a general source,
taking values in arbitrary countable sets ${\mathcal X}^n$ $(n=1,2,\dots)$,
and $P_{X^n}(x^n)$ $(x^n\in\x^n)$ be the probability function of $X_n$ for each $n$.
Then the {\em spectral entropy rates} of $\X$ is defined by
\begin{align}
\label{eq:uhx}
\underline{H}(\X):=\text{p\,-}\liminf_{\ri}\frac{1}{n}\log\frac{1}{P_{X^n}(X^n)},\quad
\overline{H}(\X):=\text{p\,-}\limsup_{\ri}\frac{1}{n}\log\frac{1}{P_{X^n}(X^n)}.
\end{align}

Let $Y$ and ${\tilde Y}$ be random valuables on a countable set $\mathcal Y$
and let $q(y)$ and ${\tilde q}(y)$ $(y\in{\mathcal Y})$ be the corresponding probability functions, respectively.
Then the {\em variational distance} between $Y$ and ${\tilde Y}$ is defined by
\begin{align}
d(Y,{\tilde Y}):=\sum_{y\in{\mathcal Y}}|q(y)-{\tilde q}(y)|.
\end{align}

\begin{proposition}[{Nagaoka \cite[Theorem 2.1.1] {H02}}]\label{prop:Nag}
Let $\X=\{X^n\}_{n=1}^{\infty}$ and $\Y=\{Y^n\}_{n=1}^{\infty}$ be arbitrary general sources. If $\overline{H}(\Y)<\underline{H}(\X)$,
then there exists a sequence of maps $\varphi_n: \x^n\rightarrow\y^n$ $(n=1,2,\dots)$
such that
\begin{align*}
\lim_{\ri} d(Y^n, \varphi_n(X^n))=0.
\end{align*}
\end{proposition}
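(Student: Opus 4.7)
The plan is a constructive proof via greedy packing, in the spirit of Han's classical information-spectrum machinery. The hypothesis $\oh(\Y) < \uh(\X)$ lets me interpose two rates $R_1, R_2$ with $\oh(\Y) < R_1 < R_2 < \uh(\X)$, and this quantitative gap is what drives the argument. Accordingly I would define the typical sets
\begin{align*}
T_n := \{ y^n \in \y^n : P_{Y^n}(y^n) \geq e^{-nR_1} \},
\qquad
S_n := \{ x^n \in \x^n : P_{X^n}(x^n) \leq e^{-nR_2} \}.
\end{align*}
Unpacking the definitions in \eqref{eq:uhx} immediately yields $\Pr\{Y^n \in T_n\} \to 1$ and $\Pr\{X^n \in S_n\} \to 1$, while the counting bound $|T_n| \leq e^{nR_1}$ follows from the defining inequality for $T_n$ together with $P_{Y^n}(T_n) \leq 1$.

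I would then construct $\vp_n$ greedily. Enumerate $T_n = \{y_{n,1}, \ldots, y_{n,K_n}\}$ with $K_n \leq e^{nR_1}$, and process the indices in order: for each $i$, adjoin not-yet-used elements of $S_n$ to a set $A_{n,i}$ until either $P_{X^n}(A_{n,i}) \geq P_{Y^n}(y_{n,i})$ or $S_n$ is exhausted. Since each element of $S_n$ contributes probability at most $e^{-nR_2}$, the overshoot on every completed bin satisfies $0 \leq P_{X^n}(A_{n,i}) - P_{Y^n}(y_{n,i}) \leq e^{-nR_2}$. Finally I would set $\vp_n(x^n) := y_{n,i}$ on $A_{n,i}$ and route any leftover $x^n \in \x^n \setminus \bigcup_i A_{n,i}$ to an arbitrary fixed element of $T_n$.

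The variational distance between $Y^n$ and $\vp_n(X^n)$ then decomposes into three pieces that all vanish: the $\Y$-mass outside $T_n$, equal to $\Pr\{Y^n \notin T_n\} \to 0$; the per-bin discrepancies on elements of $T_n$, bounded in total by $K_n \cdot e^{-nR_2} \leq e^{-n(R_2 - R_1)} \to 0$; and the leftover $X$-mass rerouted to the fixed image, controlled by $\Pr\{Y^n \notin T_n\} + \Pr\{X^n \notin S_n\}$ plus the total overshoot. The main technical obstacle, I expect, is the bookkeeping in the degenerate case where the greedy procedure exhausts $S_n$ before completing all $K_n$ bins: one must verify that the resulting shortfall on the $\Y$-side is absorbed into the already vanishing atypical contributions on both sides. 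The separation $R_2 > R_1$ is precisely the mechanism by which the qualitative hypothesis $\oh(\Y) < \uh(\X)$ becomes the quantitative overshoot bound $e^{-n(R_2 - R_1)}$ needed for asymptotic convergence.
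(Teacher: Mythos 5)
The paper itself does not prove Proposition~\ref{prop:Nag}: it is imported verbatim from Han's book (Nagaoka's Theorem~2.1.1), so there is no in-paper proof to compare against, and your proposal should be judged as a self-contained argument --- which it essentially is, and it follows the standard information-spectrum route. Your typical sets are the right ones: for $\oh(\Y)<R_1<R_2<\uh(\X)$ the definitions in \eqref{eq:uhx} give $\Pr\{Y^n\in T_n\}\to1$, $\Pr\{X^n\in S_n\}\to1$ and $|T_n|\le e^{nR_1}$, and the greedy mass-matching with per-bin overshoot at most $e^{-nR_2}$ yields a total on-$T_n$ discrepancy at most $e^{-n(R_2-R_1)}$ over completed bins. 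The ``degenerate case'' you flag does close exactly as you predict: if $S_n$ is used up, then $\bigcup_i A_{n,i}=S_n$, so the total shortfall $\sum_i\bigl(P_{Y^n}(y_{n,i})-P_{X^n}(A_{n,i})\bigr)$ over incomplete bins is at most $1-P_{X^n}(S_n)$ plus the total overshoot $K_ne^{-nR_2}$, and the rerouted leftover mass at the fixed image point is at most $\Pr\{X^n\notin S_n\}+\Pr\{Y^n\notin T_n\}$ plus the same overshoot; adding the $\Y$-mass outside $T_n$ gives $d(Y^n,\vp_n(X^n))\to0$. Two cosmetic repairs: state this shortfall computation explicitly rather than as an expectation, and, since $\x^n$ is only assumed countable, replace the ``adjoin until $S_n$ is exhausted'' loop by ``if the unused mass of $S_n$ is below $P_{Y^n}(y_{n,i})$, assign all remaining elements of $S_n$ to $A_{n,i}$ and leave later bins empty'', because with infinitely many tiny atoms the bin-filling step need not terminate. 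With those wording fixes your construction is a complete and correct proof, of the same resolvability type as Han's original argument; incidentally, it also meshes well with the paper's use of the proposition, since the map $\vp_n$ it produces is exactly what is fed into Lemma~\ref{lem:KH} in Section~\ref{sec:direct}.
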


Next, we treat a relation between random number generation and majorization.
For a sequence $a=\{a_i\}_{i=1}^m$ $(m\in\N)$, let $a^{\downarrow}=\{a_i^{\downarrow}\}_{i=1}^m$
denotes the sequence rearranged in decreasing order.
We say $a=\{a_i\}_{i=1}^m$ is majorized by $b=\{b_i\}_{i=1}^m$ and write $a\prec b$ if we have
\begin{align*}
\sum_{i=1}^k a_i^{\downarrow} \le \sum_{i=1}^k b_i^{\downarrow} \quad (k=1,2,\dots,m)
\end{align*}
and the equality for $k=m$.
Note that the majorization relation $a\prec b$ can be defined even when the numbers of elements in $a$ and $b$ differs, by including zero if necessary.
When both $a \prec b$ and $b \prec a$ hold, or equivalently $a^{\downarrow}=b^{\downarrow}$, we wirte $a \thicksim b$.

The following fact is given by Kumagai-Hayashi \cite{KH13NF}.
We show a proof here for readers' convenience since we can not find a proof in the literature.
\begin{lemma}[{Kumagai-Hayashi\cite[Section 3.2]{KH13NF}}]\label{lem:KH}
Given a map $\varphi:\x\to\y$ from a finite set $\x$ to $\y$,
and a probability function $p:x\in\x \mapsto p(x)\in[0,1]$ on $\x$, let
\begin{align*}
q(y)=\sum_{x\in \varphi^{-1}(\{y\})} p(x)
\end{align*}
be the induced probability function on $\y$.
Then we have $p \prec q$.
\end{lemma}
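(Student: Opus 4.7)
The plan is to verify the definition of majorization directly: for every $k$, bound the partial sum $\sum_{i=1}^k p_i^{\downarrow}$ of the top $k$ values of $p$ by the corresponding sum $\sum_{i=1}^k q_i^{\downarrow}$ for $q$, and observe that both sums coincide when $k$ reaches the total size (since $\varphi$ preserves total mass).

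First I would fix an integer $k$ with $1 \le k \le |\x|$ and let $S \subseteq \x$ be a set of $k$ elements on which $p$ attains its $k$ largest values, so that $\sum_{x \in S} p(x) = \sum_{i=1}^k p_i^{\downarrow}$. Next I would push $S$ forward to $T := \varphi(S) \subseteq \y$. Since $\varphi$ is a function, $|T| \le k$. For every $y \in T$ the definition of $q$ gives
\begin{align*}
q(y) = \sum_{x \in \varphi^{-1}(\{y\})} p(x) \ge \sum_{x \in S \cap \varphi^{-1}(\{y\})} p(x),
\end{align*}
because $p$ is nonnegative. Summing over $y \in T$ and using that $\{S \cap \varphi^{-1}(\{y\})\}_{y \in T}$ partitions $S$, I get
\begin{align*}
\sum_{y \in T} q(y) \ge \sum_{x \in S} p(x) = \sum_{i=1}^k p_i^{\downarrow}.
\end{align*}

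Then I would conclude by the elementary fact that the sum of $q$ over any set of cardinality at most $k$ is bounded above by the sum of its $k$ largest values, so $\sum_{y \in T} q(y) \le \sum_{i=1}^k q_i^{\downarrow}$, which yields the required inequality $\sum_{i=1}^k p_i^{\downarrow} \le \sum_{i=1}^k q_i^{\downarrow}$. For the equality at $k = |\x|$ (taking $\y$ to have the same cardinality by padding with zero probabilities if needed), both sums equal $1$ since $\varphi$ just redistributes mass, so $\sum_{y \in \y} q(y) = \sum_{x \in \x} p(x) = 1$. This establishes $p \prec q$.

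There is no serious obstacle here; the argument is purely combinatorial once one realizes that merging bins under $\varphi$ can only concentrate probability, making the resulting distribution more majorizing. The only subtlety worth flagging is the convention for comparing majorization on sets of different cardinalities, which the paper has already addressed by allowing padding with zeros.
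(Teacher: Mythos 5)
Your proof is correct, but it takes a genuinely different route from the paper. You verify the partial-sum definition of majorization directly: for each $k$ you push a top-$k$ set $S\subseteq\x$ forward to $T=\varphi(S)$ with $|T|\le k$, note that each $q(y)$ with $y\in T$ dominates the $p$-mass of $S\cap\varphi^{-1}(\{y\})$, and sum over the partition of $S$ to get $\sum_{i=1}^k p_i^{\downarrow}\le\sum_{y\in T}q(y)\le\sum_{i=1}^k q_i^{\downarrow}$, with equality of total mass at the end. The paper instead constructs, for each $y$, an explicit doubly stochastic matrix $\D_y$ (a convex combination of transpositions) with $\alpha_y=\D_y\beta_y$, takes direct sums, and invokes the characterization that $a=\D b$ with $\D$ doubly stochastic implies $a\prec b$. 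Your argument is more elementary—it needs nothing beyond nonnegativity of $p$ and the definition of $\prec$—while the paper's construction exhibits the doubly stochastic map explicitly, which ties the lemma to the standard Hardy--Littlewood--P\'olya/Birkhoff picture underlying Nielsen's theorem. Both arguments extend to countable $\x$ and $\y$ (for yours, note that a probability function has only finitely many values above any positive threshold, so the $k$ largest values are well defined), and your handling of the unequal-cardinality issue by zero-padding matches the paper's convention.
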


\begin{proof}
For each $y\in\y$, let $n(y)=|\varphi^{-1}(\{y\})|$ and $\varphi^{-1}(\{y\})=\{x_{y,1} , x_{y,2} , \dots , x_{y,n(y)}\}$,
and define $n(y)$-dimensional real vectors by
\begin{align*}
&\alpha_y := \left( p(x_{y,1}) , p(x_{y,2}) , \dots , p(x_{y,n(y)}) \right)^t,\\
&\beta_y:=\left(q(y),0,\dots,0\right)^t,
\end{align*}
where $(\dots)^{t}$ denotes the transposition of the vector.
It is straightforward to verify that $\alpha_y\prec \beta_y$ holds.
Thus there exists a doubly stochastic matrix $\D_y$ such that $\alpha_y=\D_y \beta_y$.
Indeed, letting
\begin{align}
\D_y=\sum_{j=1}^{n(y)}\frac{p(x_{y,j})}{q(y)}U_{n(y),j}
\end{align}
gives the relation $\alpha_y=\D_y \beta_y$, where $U_{n(y),j}$ is a $n(y)$ dimensional permutation matrix transposing the $1$st and $j$-th elements.
Since $\D_y$ is a convex combination of permutation matrices, it is doubly stochastic.
Now let us introduce a notation for the direct sum of vectors $u\in\R^n$ and $v\in\R^m$,
and the corresponding direct sum of matrices $A\in\R^{n\times n}$ and $B\in\R^{m\times m}$, by
\begin{align*}
u\oplus v=\begin{pmatrix} u \\ v \end{pmatrix}, \quad
A\oplus B=\begin{pmatrix} A & 0 \\ 0 & B \end{pmatrix}.
\end{align*}
Then we have $p\thicksim\bigoplus_{y\in\y} \alpha_{y}$ and $q\thicksim \bigoplus_{y\in\y}\beta_{y}$,
and hence,
\begin{align*}
p \thicksim \bigoplus_{y\in\y} \alpha_{y} = \bigoplus_{y\in\y} D_{y} \beta_{y}
 = \biggl(\bigoplus_{y\in\y} D_{y} \biggr) \biggl(\bigoplus_{y\in\y} \beta_{y} \biggr)
\prec \bigoplus_{y\in\y} \beta_{y}
\thicksim q,
\end{align*}
where the majorization $\prec$ follows from the fact that $\bigoplus_{y\in\y} \D_{y}$ is a doubly stochastic matrix.
\end{proof}

We note that the above lemma and the proof are valid for countable sets $\x$ and $\y$.

\subsection{Proof of Theorem \ref{thm:direct}}

Let $\ket{\psi_n^{AB}}$ and $\ket{\phi_n^{AB}}$ $(n=1,2,\dots)$ be the initial and target states, respectively, and
\begin{align*}
\ket{\psi_n^{AB}}&=\sum_{x^n\in\x^n}\sqrt{p_n(x^n)}\ket{e_{x^n}^{A}}\otimes\ket{e_{x^n}^B}, \\
\ket{\phi_n^{AB}}&=\sum_{y^n\in\y^n}\sqrt{q_n(y^n)}\ket{f_{y^n}^A}\otimes\ket{f_{y^n}^B}
\end{align*}
be their Schmidt decompositions.
Then their reduced density operators are given by
\begin{align*}
\psi^A_n&=\Tr_B\left[\psi_n^{AB}\right]=\sum_{x^n\in\x^n} p_n(x^n)\pure{e_{x^n}},\\
\phi^A_n&=\Tr_B\left[\phi_n^{AB}\right]=\sum_{y^n\in\y^n} q_n(y^n)\pure{f_{y^n}}.
\end{align*}
From the Schmidt coefficients we can define random variables $X^n$ and $Y^n$
subject to probability functions $p_n(x^n)$ $(x^n\in\x^n)$ and $q_n(y^n)$ $(y^n\in\y^n)$,
and general sources $\X=\{X^n\}_{n=1}^{\infty}$ and $\Y=\{Y^n\}_{n=1}^{\infty}$ composed of them.
For sequences of density operators $\whp^A=\{\psi_n^A\}\ni$ and $\whph^A=\{\phi_n^A\}\ni$,
it is straightforward to verify that
\begin{align}
\underline{H}(\X)=\uh(\widehat\psi^A),\quad\overline{H}(\Y)=\oh(\widehat\phi^A).
\end{align}

Suppose that $\uh(\widehat\psi^A)>\oh(\widehat\phi^A)$, or equivalently $\uh(\X)>\oh(\Y)$.
From Proposition \ref{prop:Nag}, there exists a sequence of maps $\varphi_n: \x^n\rightarrow\y^n$ $(n=1,2,\dots)$
such that the variational distance between $\tilde q_n(y^n)= p(\varphi_n^{-1}(\{y^n\}))$ and $q_n(y^n)$  $(y^n\in\mathcal Y)$ goes to zero asymptotically, i.e.,
\begin{align}
\lim_{n\to\infty}d(Y^n,{\tilde Y}^n)=0,\label{eq:vardistyy}
\end{align}
where ${\tilde Y}^n$ is a random variable subject to the probability function $\tilde q_n(y^n)$.
From Lemma \ref{lem:KH} it implies that $p_n\prec {\tilde q}_n$.

Consider a state
\begin{align*}
\ket{{\tilde\phi}_n^{AB}}:=\sum_{y^n\in\y^n}\sqrt{\tilde q_n(y^n)}\ket{f_{y^n}^A}\otimes\ket{f_{y^n}^B}.
\end{align*}
Due to Nielsen's theorem \cite{N99}, $\ket{\psi_n^{AB}}$ can be deterministically converted to $\ket{\tilde\phi_n^{AB}}$ by LOCC for each $n$.

To complete the proof, we verify that the state $\ket{\tilde\phi_n^{AB}}$ is equal to the target state $\ket{\phi_n^{AB}}$ asymptotically. Let $F(\rho,\sigma)$ be the fidelity between state $\rho$ and $\sigma$, defined by $F(\rho,\sigma):=\Tr|\sqrt{\rho}\sqrt{\sigma}|$. Noting that
\begin{align*}
{\tilde \phi^A_n}&=\Tr_B\left[{\tilde\phi_n^{AB}}\right]=\sum_{y^n\in\y^n}\tilde q_n(y^n)\pure{f_{y^n}},
\end{align*}
we have
\begin{align}
F({\tilde\phi_n^{AB}},\phi_n^{AB})=|\langle \tilde\phi_n,\phi_n \rangle|=\sum_{y^n\in\y^n}\sqrt{\tilde q_n(y^n)q_n(y^n)}=F({\tilde\phi_n^{A}},\phi_n^{A}).\label{eq:FABeqFA}
\end{align}

It is well known \cite{NC00} that the trace distance and the fidelity are related as
\begin{align}
1-F(\rho,\sigma)\le \norm{\rho-\sigma}_1\le\sqrt{1-F(\rho,\sigma)^2}.\label{eq:fintd}
\end{align}
Noting $\norm{ {\tilde\phi_n^A}-\phi_n^A }_1=d(Y^n,{\tilde Y}^n)$, from (\ref{eq:vardistyy}) we have
\begin{align}
\lim_{n\rightarrow\infty}\norm{ {\tilde\phi_n^A}-\phi_n^A }_1=0,
\end{align}
which implies
\begin{align}
\label{eq:lr}
\lim_{n\rightarrow\infty}F({\tilde\phi_n^A},\phi_n^A )=1
\end{align}
from the first inequality of (\ref{eq:fintd}). From (\ref{eq:FABeqFA}), it implies
\begin{align}
\lim_{n\rightarrow\infty}F({\tilde\phi_n^{AB}},\phi_n^{AB})=1,
\end{align}
which leads to
\begin{align}
\lim_{\ri}\norm{\tilde\phi_n^{AB}-\phi_n^{AB}}_1 = 0
\end{align}
due to the second inequality of (\ref{eq:fintd}).\hfill$\square$

\section{Converse Part}
\label{sec:converse}

In this section, we prove Theorem \ref{thm:converse} after reviewing properties of spectral divergence rates.

\subsection{Properties of Spectral Divergence Rates}
It is proved in \cite{BD06} that spectral divergence rates have properties of monotonicity and continuity for $\ve=0$. We extend these results to any $\ve\in [0,1]$. We also prove an inequality for the spectral entropy rates of a sequence of product states.


\subsubsection{Prerequisites}

Let $A$ be a Hermitian operator,
and let $A=\sum_k a_k E_k$ be the spectral decomposition.
Then the positive and negative parts of $A$ are, respectively, defined by
\begin{equation*}
A_+:=\sum_{k:\,a_k>0}a_kE_k, \quad A_-:=\sum_{k:\,a_k\le 0}(-a_k)E_k.
\label{eq:1}
\end{equation*}
Following \cite{N98,NH07},
we denote the corresponding projections by
\begin{equation*}
\{A>0\}:=\sum_{k:\,a_k>0}E_k, \quad \{A\le 0\}:=\sum_{k:\,a_k\le 0}E_k.
\label{eq:2}
\end{equation*}
With the above notations, we have $A_+=A\{A>0\}$ and $A_-=-A\{A\le 0\}$.
Note that
\begin{eqnarray}
A=A_ + -A_-,\quad |A|=A_ + +A_-
\label{eq:3}
\end{eqnarray}
are, respectively, the Jordan decomposition
and the absolute value of the operator $A$.
The following lemma is essential in information-spectrum methods.

\begin{lemma}\label{lem:np}
For any $0\le T\le I$, we have
\begin{eqnarray}
\Tr A_+ = \Tr A\{A>0\} \ge \Tr AT,
\label{eq:4}
\end{eqnarray}
or equivalently,
\begin{eqnarray}
\Tr A_+ = \max_{T:\,0\le T\le I}\Tr AT.
\label{eq:5}
\end{eqnarray}
\end{lemma}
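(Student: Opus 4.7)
The plan is to verify the equality first and then reduce the inequality to the positivity of a trace of a product of two positive operators. The lemma is a standard fact in the information-spectrum toolkit, so there is no real obstacle; the goal is simply to lay out the argument cleanly.

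First I would dispatch the equality $\Tr A_+ = \Tr A\{A>0\}$. This is immediate from the definitions: by the spectral decomposition of $A$, the operator $A_+$ coincides with $A \{A>0\}$, so equality of traces is automatic. This also handles the observation that $\Tr A_+$ is attained in the optimization on the right-hand side of \eqref{eq:5} by the explicit choice $T = \{A>0\}$, which is a projection and therefore satisfies $0\le T\le I$.

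Next I would establish the inequality $\Tr A T \le \Tr A_+$ for arbitrary $0 \le T \le I$. The natural route is the Jordan decomposition $A = A_+ - A_-$, which gives
\begin{equation*}
\Tr A T = \Tr A_+ T - \Tr A_- T.
\end{equation*}
For the first term, $A_+ \ge 0$ and $I - T \ge 0$, so the product $A_+^{1/2}(I-T)A_+^{1/2}$ is positive semidefinite; taking traces and using cyclicity yields $\Tr A_+ (I-T) \ge 0$, i.e., $\Tr A_+ T \le \Tr A_+$. For the second term, $A_- \ge 0$ and $T \ge 0$, so the same positivity argument gives $\Tr A_- T \ge 0$. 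Combining the two bounds produces
\begin{equation*}
\Tr A T \le \Tr A_+ T \le \Tr A_+,
\end{equation*}
which is the desired inequality \eqref{eq:4}. Since this upper bound is attained at $T = \{A>0\}$, the equivalent formulation \eqref{eq:5} follows at once.

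The only step requiring any care is the positivity of $\Tr PQ$ for $P,Q\ge 0$, which I would justify in one line via $\Tr PQ = \Tr P^{1/2} Q P^{1/2} \ge 0$. Everything else is bookkeeping from the Jordan decomposition, so no genuine obstacle arises.
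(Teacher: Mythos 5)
Your proof is correct. Note that the paper itself states Lemma~\ref{lem:np} without proof, treating it as a standard fact of the information-spectrum method, so there is no in-paper argument to compare against; your route---the identity $A_+=A\{A>0\}$ from the spectral decomposition, the Jordan decomposition $A=A_+-A_-$, and positivity of $\Tr PQ=\Tr P^{1/2}QP^{1/2}$ for $P,Q\ge 0$ applied to both $\Tr A_+(I-T)\ge 0$ and $\Tr A_-T\ge 0$, with the maximum attained at $T=\{A>0\}$---is exactly the standard argument and fills that gap cleanly.
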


It is also useful to note the relation with the trace norm:
\begin{eqnarray*}
\Tr A_+ =  \frac{1}{2}\{ \Tr |A| + \Tr A \},
\ \Tr A_- =  \frac{1}{2}\{ \Tr |A| - \Tr A \},
\label{eq:53}
\end{eqnarray*}
which follows from \eqref{eq:3}.
Especially, if $\Tr A = 0$
\begin{align}
\Tr |A| &= 2\, \Tr A_+  = 2\, \Tr A_- ,
\label{eq:54} \\
\Tr |A| &= 2\max_{T:\,0\le T\le I}\Tr AT.
\label{eq:55}
\end{align}
It should also be noted that from
\begin{textmath}
\Tr (A-B)_+ = \Tr (A-B)\{A-B>0\} \ge 0
\end{textmath},
we have
\begin{align}
\Tr A\{A-B>0\}\ge \Tr B\{A-B>0\}.
\label{eq:6}
\end{align}
and it obviously holds that
\begin{align}
\Tr (A-B)_+ = \Tr (A-B) \{A-B>0\} \ge \Tr A \{A-B>0\}.
\label{eq:7}
\end{align}
The following lemma was poited out by Bowen-Datta \cite{BD08} for completely positive and trace preserving maps,
It should be noted that $\F$ is no need to be complete positive map.
\begin{lemma}\label{bdm}
Let $A$ and $B$ be Hermitian operators. For any trace preserving (TP) maps $\F$, we have
$\Tr A_+ \ge \Tr \F(A)_+$.
\end{lemma}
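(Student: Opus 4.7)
The plan is to reduce the inequality to the variational characterization in Lemma~\ref{lem:np}, linked on the two sides by the Heisenberg-picture adjoint $\F^*$ of $\F$, defined by $\Tr \F(X)Y=\Tr X\,\F^*(Y)$ for all operators $X,Y$. Setting $P:=\{\F(A)>0\}$, which satisfies $0\le P\le I$, Lemma~\ref{lem:np} gives
\[
\Tr \F(A)_+ \;=\; \Tr \F(A) P \;=\; \Tr A\,\F^*(P).
\]
The remaining task is to show that $T:=\F^*(P)$ still lies in $[0,I]$, and then to apply Lemma~\ref{lem:np} once more to the pair $(A,T)$, producing $\Tr A T \le \Tr A_+$.

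To establish $0\le T\le I$ I would invoke two standard properties of the adjoint. Positivity of $\F$ implies positivity of $\F^*$, so $P\ge 0$ gives $T\ge 0$. Trace preservation of $\F$ translates into unitality of the adjoint, $\F^*(I)=I$, since $\Tr X\,\F^*(I)=\Tr \F(X)=\Tr X$ for every operator $X$; applying $\F^*$ to $I-P\ge 0$ then yields $I-T\ge 0$. Chaining the two applications of Lemma~\ref{lem:np} gives $\Tr \F(A)_+ = \Tr A T\le \Tr A_+$, as required.

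I do not anticipate any substantive obstacle, since each ingredient follows immediately from the definition of $\F^*$ and from the already established Lemma~\ref{lem:np}. The only point that warrants comment is the scope of the hypothesis. For $\F(A)_+$ to be meaningful at all, $\F$ must at least map Hermitian operators to Hermitian operators, and the argument above uses positivity of $\F$ through the positivity of $\F^*$. Thus ``trace preserving'' in the statement should be read as ``positive and trace preserving'', and the genuine improvement over the Bowen--Datta formulation is the removal of complete positivity, not of positivity itself. The auxiliary operator $B$ named in the statement plays no role in the conclusion and can be regarded as a vestige of the broader setting.
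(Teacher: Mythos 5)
Your argument is correct, and it in fact supplies something the paper leaves out: Lemma~\ref{bdm} is stated without proof, only with a citation to Bowen--Datta \cite{BD08}, who prove the inequality for CPTP maps. Your route --- dualizing with the adjoint $\F^*$, noting that trace preservation of $\F$ is unitality of $\F^*$ and positivity of $\F$ gives positivity of $\F^*$, so that $T=\F^*(\{\F(A)>0\})$ satisfies $0\le T\le I$ and Lemma~\ref{lem:np} applies twice --- is sound. A slightly more direct variant, essentially the one in \cite{BD08}, avoids the adjoint altogether: positivity gives $\F(A)\le\F(A_+)$, hence with $P=\{\F(A)>0\}$ one has $\Tr\F(A)_+=\Tr\F(A)P\le\Tr\F(A_+)P\le\Tr\F(A_+)=\Tr A_+$, the last equality by trace preservation; both proofs use exactly the same hypotheses. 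Your closing caveat is also justified, and is a genuine correction to the paper's phrasing: positivity cannot be dropped, only complete positivity. For a merely trace-preserving (Hermiticity-preserving) map the statement fails, e.g.\ on a qutrit take $\F(X)=\tfrac{2}{3}(\Tr X)I-X$ and $A=\pure{0}$, which gives $\Tr A_+=1$ while $\Tr\F(A)_+=\tfrac{4}{3}$; indeed, applying the lemma to $A\ge 0$ already forces $\F(A)\ge 0$. So the remark that $\F$ ``need not be completely positive'' should be read as asserting the lemma for positive TP maps, which is exactly the class your proof covers (and, as you say, the operator $B$ in the statement plays no role).
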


\subsubsection{Monotonicity}

There is an alternative expression for the spectral divergence rates introduced by Bowen-Datta\cite{BD06}. For each $\ve\in[0,1]$, let
\begin{align*}
\uc(\ve|\whr||\whs)&:=\sup\left\{a\bigm|\liminf_{\ri}{\rm Tr}(\r_n-e^{na}\s_n)_+\ge1-\ve\right\}, \\
\oc(\ve|\whr||\whs)&:=\inf\left\{a\bigm |\limsup_{\ri}{\rm Tr}(\r_n-e^{na}\s_n)_+\le\ve\right\}.
\end{align*}
It can be shown that these apparently different definitions yield the same quantities \cite{BD06}.
\begin{lemma}\label{bd}
For any $\ve\in[0,1]$, we have
\begin{eqnarray}
\label{eq:00}
\uc(\ve|\whr||\whs)=\ud(\ve|\whr||\whs),\\
\label{eq:01}
\oc(\ve|\whr||\whs)=\od(\ve|\whr||\whs).
\end{eqnarray}
\end{lemma}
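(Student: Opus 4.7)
The plan is to prove each of \eqref{eq:00} and \eqref{eq:01} by establishing the two opposite inequalities separately. The central tool is the algebraic identity
\[
\Tr(\rho_n - e^{na}\sigma_n)_+ \;=\; \Tr \rho_n P_n^{(a)} - e^{na}\,\Tr \sigma_n P_n^{(a)}, \qquad P_n^{(a)}:=\{\rho_n - e^{na}\sigma_n>0\},
\]
combined with Lemma \ref{lem:np} and the consequence $\Tr\sigma_n P_n^{(a)}\le e^{-na}\Tr\rho_n P_n^{(a)}\le e^{-na}$ obtained by rearranging the identity and using $(\rho_n - e^{na}\sigma_n)_+\ge 0$. (Here I implicitly use $\sigma_n\ge 0$, which holds in every application of this lemma in the paper, in particular for $\sigma_n=I_n$.)

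The ``easy'' halves $\uc(\ve)\le \ud(\ve)$ and $\oc(\ve)\le \od(\ve)$ follow directly from the identity: $\sigma_n\ge 0$ gives $\Tr(\rho_n - e^{na}\sigma_n)_+ \le \Tr \rho_n P_n^{(a)}$, so a $\liminf_n\ge 1-\ve$ (resp.\ $\limsup_n\le \ve$) on the left-hand side transfers immediately to the right-hand side, yielding the stated inequalities. For the reverse direction $\ud(\ve)\le \uc(\ve)$, I would fix $a<\ud(\ve)$ and $a'<a$ and apply Lemma \ref{lem:np} with $T=P_n^{(a)}$ to write
\[
\Tr(\rho_n - e^{na'}\sigma_n)_+ \;\ge\; \Tr \rho_n P_n^{(a)} - e^{na'}\,\Tr \sigma_n P_n^{(a)} \;\ge\; \Tr \rho_n P_n^{(a)} - e^{n(a'-a)}.
\]
Since $a'<a$, the last term vanishes and $\liminf_n \Tr(\rho_n - e^{na'}\sigma_n)_+ \ge 1-\ve$, so $a'\le \uc(\ve)$; letting $a'\nearrow a\nearrow \ud(\ve)$ finishes. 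The matching inequality $\od(\ve)\le \oc(\ve)$ is symmetric: for $a>\oc(\ve)$ and $a'>a$, decomposing $\rho_n = e^{na}\sigma_n+(\rho_n-e^{na}\sigma_n)$ and applying Lemma \ref{lem:np} with $T=P_n^{(a')}$ gives
\[
\Tr \rho_n P_n^{(a')} \;\le\; e^{na}\,\Tr \sigma_n P_n^{(a')} + \Tr(\rho_n - e^{na}\sigma_n)_+ \;\le\; e^{n(a-a')} + \Tr(\rho_n - e^{na}\sigma_n)_+,
\]
whose $\limsup$ is then at most $\ve$; letting $a'\searrow a\searrow \oc(\ve)$ concludes.

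The only conceptual subtlety I anticipate is the two-parameter slack $a\ne a'$: without a strict gap, the cross term $e^{n(a'-a)}$ (or $e^{n(a-a')}$) would be $1$ rather than $0$, and the argument would collapse to a trivial tautology. Once this slack is in place, the remainder is a routine combination of Lemma \ref{lem:np} with standard properties of $\liminf$ and $\limsup$, and both identities \eqref{eq:00} and \eqref{eq:01} fall out by taking the stated limits.
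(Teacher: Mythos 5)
Your proposal is correct and takes essentially the same route as the paper's appendix proof: the easy inequalities $\uc\le\ud$ and $\oc\le\od$ come from dropping the $e^{na}\sigma_n$ term (using $\sigma_n\ge 0$, exactly as in \eqref{eq:29}), and the reverse ones from Lemma \ref{lem:np} applied with the spectral projection taken at a strictly shifted threshold together with $\Tr \sigma_n P_n^{(a)}\le e^{-na}$ from \eqref{eq:6}, which is precisely the paper's chain \eqref{eq:34} with its pair $(a,b)$ playing the role of your $(a',a)$. One cosmetic slip: in the easy half for the sup quantities, the hypothesis $\limsup_n\le\ve$ transfers from $\Tr\rho_n P_n^{(a)}$ to $\Tr(\rho_n-e^{na}\sigma_n)_+$ (right to left in your inequality, not left to right as written), which is what yields $\oc(\ve|\whr\|\whs)\le\od(\ve|\whr\|\whs)$.
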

The proof is given in Appendix.

We use this lemma to prove the monotonicity of spectral divergence rates as follows.
\begin{proposition} \label{prop:monotonicity}
For any sequence of TP maps $\widehat{\F}=\{\F_n\}_{n=1}^{\infty}$, the monotonicity of the spectral divergence rates hold, that is,
\begin{align}
&\ud(\ve|\whr||\whs)\ge\ud\big(\ve|\widehat{\F}(\whr)||\widehat{\F}(\whs)\big)\label{eq:monotod},\\
&\od(\ve|\whr||\whs)\ge\od\big(\ve|\widehat{\F}(\whr)||\widehat{\F}(\whs)\big)\label{eq:monotud},
\end{align}
for any $\ve\in [0,1].$
\end{proposition}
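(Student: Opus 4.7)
The plan is to leverage the reformulation of the spectral divergence rates given by Lemma~\ref{bd} together with the operator-level monotonicity in Lemma~\ref{bdm}. The key point is that the definitions of $\ud$ and $\od$ in \eqref{eq:002}--\eqref{eq:003} involve the projection $\{\rho_n - e^{na}\sigma_n > 0\}$, which does not transform nicely under a TP map applied to both arguments. By contrast, the equivalent $\uc$ and $\oc$ characterizations are phrased in terms of $\Tr(\rho_n - e^{na}\sigma_n)_+$, and here Lemma~\ref{bdm} applies directly.

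First I would fix $a\in\R$ and set $A_n := \rho_n - e^{na}\sigma_n$. Since $\F_n$ is linear and trace preserving, $\F_n(A_n) = \F_n(\rho_n) - e^{na}\F_n(\sigma_n)$, and Lemma~\ref{bdm} yields
\begin{align*}
\Tr(\rho_n - e^{na}\sigma_n)_+ \;\ge\; \Tr\big(\F_n(\rho_n) - e^{na}\F_n(\sigma_n)\big)_+
\end{align*}
for every $n$. Taking $\liminf_{\ri}$ and $\limsup_{\ri}$ on both sides preserves the inequality, so any $a$ for which the RHS sequence satisfies the threshold condition defining $\uc(\ve|\widehat\F(\whr)\|\widehat\F(\whs))$ (resp.\ $\oc$) automatically satisfies the same condition for the LHS sequence defining $\uc(\ve|\whr\|\whs)$ (resp.\ $\oc$).

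Concretely, for the $\uc$ statement: if $a$ lies in the set $\{a : \liminf_n \Tr(\F_n(\rho_n) - e^{na}\F_n(\sigma_n))_+ \ge 1-\ve\}$, then the displayed inequality forces $\liminf_n \Tr(\rho_n - e^{na}\sigma_n)_+ \ge 1-\ve$, so $a$ is also in the set defining $\uc(\ve|\whr\|\whs)$. Taking suprema gives $\uc(\ve|\whr\|\whs) \ge \uc(\ve|\widehat\F(\whr)\|\widehat\F(\whs))$. The dual argument for $\oc$ is analogous: if $\limsup_n \Tr(\rho_n - e^{na}\sigma_n)_+ \le \ve$, the same chain of inequalities yields the corresponding bound for the mapped sequence, and taking infima reverses the set inclusion appropriately. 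Invoking Lemma~\ref{bd} to identify $\uc=\ud$ and $\oc=\od$ then delivers \eqref{eq:monotod} and \eqref{eq:monotud}.

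There is no real obstacle here beyond correctly tracking which direction the set inclusions go for $\sup$ versus $\inf$; the heavy lifting has already been done in Lemma~\ref{bdm} (whose proof only uses trace preservation, not complete positivity) and in the equivalence Lemma~\ref{bd}. I would keep the writeup short, handling both \eqref{eq:monotod} and \eqref{eq:monotud} in parallel to avoid redundancy.
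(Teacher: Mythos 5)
Your proposal is correct and follows essentially the same route as the paper: apply Lemma~\ref{bdm} to $A_n=\rho_n-e^{na}\sigma_n$ (using linearity and trace preservation of $\F_n$), take $\liminf$/$\limsup$, and transfer the threshold conditions via the equivalence $\uc=\ud$, $\oc=\od$ of Lemma~\ref{bd}. Your set-inclusion phrasing is if anything slightly cleaner than the paper's $\gamma$-perturbation argument, but the substance is identical.
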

\begin{proof}
 For any $\g >0$, choose $a=\ud\big(\ve|\widehat{\F}(\whr)||\widehat{\F}(\whs)\big)-\g$. From Lemma \ref {bdm}, we have
\begin{align}
\label{eq:tf}
1-\ve\le\Tr \big(\F_n(\r_n)-e^{na}\F_n(\s_n)\big)_+\le\Tr(\r_n-e^{na}\s_n)_+.
\end{align}
Taking \begin{textmath}\liminf_{\ri} \end{textmath} of (\ref{eq:tf}), we have
\begin{align}
1-\ve\le\liminf_{\ri}\Tr \big(\F_n(\r_n)-e^{na}\F_n(\s_n)\big)_+\le\liminf_{\ri}\Tr(\r_n-e^{na}\s_n)_+.
\end{align}
From (\ref{eq:00}) and the definition of $\uc$, we obtain $a=\ud\big(\ve|\widehat{\F}(\whr)||\widehat{\F}(\whs)\big)-\g\le \ud(\ve|\whr||\whs)$ for all $\g >0$, which implies (\ref{eq:monotod}).

Similarly, if we choose $a=\ud(\whr||\whs)+\g$, and from Lemma \ref {bdm}, we have
\begin{align}
\label{eq:ve}
 \Tr \big(\F_n(\r_n)-e^{na}\F_n(\s_n)\big)_+\le\Tr(\r_n-e^{na}\s_n)_+\le\ve,
\end{align}
Thus, taking \begin{textmath}\limsup_{\ri} \end{textmath} of (\ref{eq:ve}), we have
\begin{align}
 \limsup_{\ri}\Tr \big(\F_n(\r_n)-e^{na}\F_n(\s_n)\big)_+\le\limsup_{\ri}\Tr(\r_n-e^{na}\s_n)_+\le\ve.
\end{align}
From (\ref{eq:01}) and the definition of $\oc$, we have $\od\big(\ve|\widehat{\F}(\whr)||\widehat{\F}(\whs)\big)\le a=\od(\ve|\whr||\whs)+\g$ for all $\g>0$, which leads to (\ref{eq:monotud}).
\end{proof}
From (\ref{eq:005}), we have the following Corollary.
\begin{corollary}\label{ru}For any sequence of unital $TP$ maps $\widehat{\F}=\{\F_n\}_{n=1}^{\infty}$, the following inequalities hold for any $\ve\in [0,1]:$
\begin{eqnarray}
\label{eq:19}
\oh(\ve|\whr)\le\oh\big(\ve|\widehat{\F}(\whr)\big),\\
\label{eq:20}
\uh(\ve|\whr)\le\uh\big(\ve|\widehat{\F}(\whr)\big).
\end{eqnarray}
\end{corollary}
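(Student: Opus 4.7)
The plan is to read off both inequalities directly from Proposition \ref{prop:monotonicity} after specialising the second argument to the identity sequence $\whi$ and invoking unitality. The corollary is essentially a one-line consequence of the preceding proposition: the definition \eqref{eq:005} rewrites $\oh(\ve|\whr)$ and $\uh(\ve|\whr)$ as $-\ud(\ve|\whr||\whi)$ and $-\od(\ve|\whr||\whi)$ respectively, so the claimed monotonicity of spectral entropy rates under $\widehat{\F}$ is equivalent to monotonicity of the spectral divergence rates in their first argument against the fixed reference sequence $\whi$.

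Concretely, I would fix $\ve\in[0,1]$ and apply \eqref{eq:monotod} with $\whs=\whi$, obtaining
\begin{align*}
\ud(\ve|\whr||\whi) \ge \ud\big(\ve|\widehat{\F}(\whr)||\widehat{\F}(\whi)\big).
\end{align*}
Unitality of each $\F_n$ gives $\F_n(I_n)=I_n$, so $\widehat{\F}(\whi)$ coincides term by term with $\whi$, and the right-hand side simplifies to $\ud(\ve|\widehat{\F}(\whr)||\whi)$. Negating both sides and using the definition $\oh(\ve|\cdot)=-\ud(\ve|\cdot||\whi)$ yields \eqref{eq:19}. Carrying out the identical argument with \eqref{eq:monotud} in place of \eqref{eq:monotod} and $\od$ in place of $\ud$ produces \eqref{eq:20}.

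No serious obstacle is anticipated. The only subtleties worth flagging are that unitality, rather than mere trace preservation, is precisely what stabilises the reference sequence $\whi$ under $\widehat{\F}$, and that Proposition \ref{prop:monotonicity} itself needs only trace preservation of the $\F_n$, which is automatic for unital TP maps, so no additional hypothesis enters.
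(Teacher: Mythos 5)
Your proposal is correct and follows exactly the route the paper intends: specialise Proposition \ref{prop:monotonicity} to $\whs=\whi$, use unitality to get $\F_n(I_n)=I_n$, and negate via the definitions in \eqref{eq:005}, with \eqref{eq:monotod} giving \eqref{eq:19} and \eqref{eq:monotud} giving \eqref{eq:20}. The paper states the corollary without writing this out, and your one-line derivation, including the remark that unitality is what fixes the reference sequence, is precisely the omitted argument.
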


\subsubsection{Continuity}

Spectral divergence rates are ``continuous'' with respect to the sequences of density operators in the first argument, that is, spectral divergence rates of two sequences coincide if the sequences are asymptotically equal.

\begin{lemma}\label{td}
Let $\whr=\{\r_n\}_{n=1}^{\infty}$ and $\whr'=\{\r'_n\}_{n=1}^{\infty}$
be sequences of density operators. If
\begin{eqnarray}
\lim_{n\to\infty}||{\r_n-\r'_n}||_1=0,
\end{eqnarray}
then
\begin{eqnarray}
\label{eq:11}
\ud(\ve|\whr||\whs)&=&\ud(\ve|\whr'||\whs),\\
\label{eq:12}
\od(\ve|\whr||\whs)&=&\od(\ve|\whr'||\whs)
\end{eqnarray}
hold for any $0\le\ve\le 1$ and
any sequence $\whs=\{\s_n\}_{n=1}^{\infty}$ of Hermitian operators.\\
\end{lemma}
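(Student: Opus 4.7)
The natural obstacle is that the spectral projection $\{\rho_n - e^{na}\sigma_n > 0\}$ appearing in the definition of $\underline D$ and $\overline D$ depends on $\rho_n$, so one cannot directly compare $\Tr\rho_n\{\rho_n - e^{na}\sigma_n>0\}$ with $\Tr\rho_n'\{\rho_n' - e^{na}\sigma_n>0\}$ by Hölder-type inequalities. The plan is to first switch to the equivalent $\underline C,\overline C$-formulation (Lemma~\ref{bd}), where the projection is absorbed into the positive-part functional $A\mapsto \Tr A_+$, which enjoys the variational characterization of Lemma~\ref{lem:np}.

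Concretely, I would first establish the stability estimate
\begin{align*}
\bigl|\Tr(\rho_n - e^{na}\sigma_n)_+ - \Tr(\rho_n' - e^{na}\sigma_n)_+\bigr| \le \norm{\rho_n-\rho_n'}_1
\end{align*}
for every $a\in\mathbb R$ and every $n$. The proof is short: by Lemma~\ref{lem:np}, $\Tr(\rho_n' - e^{na}\sigma_n)_+ = \Tr(\rho_n' - e^{na}\sigma_n)T^*$ for $T^* := \{\rho_n' - e^{na}\sigma_n>0\}$, and since $0\le T^*\le I$, the same lemma gives
\begin{align*}
\Tr(\rho_n - e^{na}\sigma_n)_+ \ge \Tr(\rho_n - e^{na}\sigma_n)T^* = \Tr(\rho_n' - e^{na}\sigma_n)_+ + \Tr(\rho_n - \rho_n')T^* \ge \Tr(\rho_n' - e^{na}\sigma_n)_+ - \norm{\rho_n-\rho_n'}_1.
\end{align*}
Swapping the roles of $\rho_n$ and $\rho_n'$ yields the bound.

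Since $\norm{\rho_n-\rho_n'}_1\to 0$ by assumption, this stability estimate shows that for every fixed $a$,
\begin{align*}
\liminf_{\ri}\Tr(\rho_n - e^{na}\sigma_n)_+ = \liminf_{\ri}\Tr(\rho_n' - e^{na}\sigma_n)_+,
\end{align*}
and the analogous identity with $\limsup$. Plugging these identities into the definitions of $\underline C(\ve|\whr||\whs)$ and $\overline C(\ve|\whr||\whs)$ gives
\begin{align*}
\underline C(\ve|\whr||\whs)=\underline C(\ve|\whr'||\whs), \qquad \overline C(\ve|\whr||\whs)=\overline C(\ve|\whr'||\whs),
\end{align*}
and Lemma~\ref{bd} then translates these equalities into the desired \eqref{eq:11} and \eqref{eq:12}. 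The only mildly subtle step is the variational-argument proof of the stability estimate; everything else is a mechanical translation.
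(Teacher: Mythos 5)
Your proposal is correct and takes essentially the same route as the paper: both establish a perturbation bound on $\Tr(\rho_n-e^{na}\sigma_n)_+$ via the variational Lemma~\ref{lem:np} and then transfer the conclusion through the $\uc/\oc$ equivalence of Lemma~\ref{bd}. The only cosmetic differences are that the paper gets the sharper constant $\tfrac{1}{2}\norm{\rho_n-\rho'_n}_1$ (exploiting $\Tr(\rho_n-\rho'_n)=0$) and argues one-sidedly with $a=\ud(\ve|\whr||\whs)-\g$ and role-swapping, whereas you use a two-sided estimate and equality of the $\liminf$/$\limsup$ for each fixed $a$; neither difference matters.
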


\begin{proof}
From \eqref{eq:55}, we have
\begin{align*}
\norm{\rho_n-\rho'_n}_1
&=||{(\rho_n-e^{na}\sigma_n)-(\rho'_n-e^{na}\sigma_n)}||_1 \\
&=\Tr|{(\rho_n-e^{na}\sigma_n)-(\rho'_n-e^{na}\sigma_n)}|\\
&\ge 2 \Tr(\rho_n-e^{na}\sigma_n)\{\rho_n-e^{na}\sigma_n>0\}\\
&-2\Tr(\rho'_n-e^{na}\sigma_n)\{\rho_n-e^{na}\sigma_n>0\} \\
&\ge 2\Tr(\rho_n-e^{na}\sigma_n)_+ -2\Tr(\rho'_n-e^{na}\sigma_n)_+
\end{align*}
where the last inequality follows from \eqref{eq:4}. Hence
\begin{align}
\Tr(\rho'_n-e^{na}\sigma_n)_+ +\frac{1}{2}||{\rho_n-\rho'_n}||_1
\ge \Tr(\rho_n-e^{na}\sigma_n)_+.
\label{eq:10}
\end{align}
For any $\g>0$, let $a=\ud(\ve|\whr||\whs)-\g$.
Then from $\ud(\ve|\whr||\whs)=\uc(\ve|\whr||\whs)$,
we have
\begin{eqnarray}
\liminf_{n\to\infty}\Tr(\rho_n-e^{na}\sigma_n)_+ \ge 1-\ve.
\end{eqnarray}
Thus taking \begin{textmath}\liminf_{\ri}\end{textmath} of \eqref{eq:10} gives
\begin{align*}
\liminf_{n\to\infty}\Tr(\rho'_n-e^{na}\sigma_n)_+
&\ge \liminf_{n\to\infty}\Tr(\rho_n-e^{na}\sigma_n)_+ \\
&\ge 1-\ve,
\end{align*}
which implies $a=\ud(\ve|\whr||\whs)-\g\le\ud(\ve|\whr'||\whs)$.
Since $\g>0$ can be arbitrary, we have
$\ud(\ve|\whr||\whs)\le\ud(\ve|\whr'||\whs)$.
Interchanging the role of $\whr$ and $\whr'$, we have
the converse inequality $\ud(\ve|\whr||\whs)\ge\ud(\ve|\whr'||\whs)$.
Thus we have \eqref{eq:11}.
In the same way, we have \eqref{eq:12}.
\end{proof}
From \eqref{eq:005}, we have the following corollary.
\begin{corollary}\label{coro}
Let $\whr=\{\rho_n\}_{n=1}^{\infty}$ and $\whr'=\{\rho'_n\}_{n=1}^{\infty}$
be sequences of density operators.
If
\begin{align}
\lim_{n\to\infty}\norm{\rho_n-\rho'_n}_1=0,
\label{eq:47}
\end{align}
then
\begin{align}
\uh(\ve|\whr)&=\uh(\ve|\whr'),
\label{eq:48}
\\
\oh(\ve|\whr)&=\oh(\ve|\whr')
\label{eq:49}
\end{align}
hold for any $0\le\ve\le 1$.
\end{corollary}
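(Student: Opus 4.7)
The plan is to derive Corollary \ref{coro} as an immediate specialization of Lemma \ref{td}. The definitions in \eqref{eq:005} express the spectral entropy rates as negatives of spectral divergence rates against the sequence $\whi=\{I_n\}_{n=1}^\infty$ of identity operators, so continuity of $\uh(\ve|\cdot)$ and $\oh(\ve|\cdot)$ in the trace-norm asymptotic sense should follow with essentially no additional work from the continuity already established for $\ud$ and $\od$ in their first argument.

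Concretely, first I would assume \eqref{eq:47}, i.e., $\lim_{n\to\infty}\norm{\rho_n-\rho'_n}_1=0$. Then I would invoke Lemma \ref{td} with the fixed Hermitian-operator sequence chosen to be $\whs = \whi$. This yields $\ud(\ve|\whr||\whi)=\ud(\ve|\whr'||\whi)$ and $\od(\ve|\whr||\whi)=\od(\ve|\whr'||\whi)$ for every $\ve\in[0,1]$. Finally, negating both sides and applying the defining relations
\begin{align*}
\uh(\ve|\whr) = -\od(\ve|\whr||\whi), \quad \oh(\ve|\whr) = -\ud(\ve|\whr||\whi),
\end{align*}
along with the analogous identities for $\whr'$, gives \eqref{eq:48} and \eqref{eq:49}.

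There is really no hard step here: the only thing to check is that the identity sequence $\whi$ is an admissible choice of $\whs$ in Lemma \ref{td}, which it is since the lemma is stated for any sequence of Hermitian operators. Thus the entire proof is a two-line chain of equalities, and I would write it as such without any additional estimates.
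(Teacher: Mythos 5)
Your proposal is correct and matches the paper's own route exactly: the paper derives the corollary from Lemma \ref{td} by taking $\whs=\whi$ and applying the defining relations \eqref{eq:005}, which is precisely your two-line argument.
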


\subsubsection{ Spectral Entropy of Product States}
\begin{lemma}For arbitrary sequences $\whr^A=\bn{\r_n^A}\ni$ and $\whs^B=\bn{\s_n^B}\ni$, the followings hold for any $\ve\in[0,1]:$
\begin{align}
\label{eq:oh}
\oh(\ve|\whr^A\otimes\whs^B)\ge\oh(\ve|\whr^A),\\
\label{eq:uh}
\uh(\ve|\whr^A\otimes\whs^B)\ge\uh(\ve|\whr^A).
\end{align}
\end{lemma}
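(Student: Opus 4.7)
The plan is to reduce both inequalities to a single pointwise estimate, namely
\begin{align*}
\Tr(\rho_n^A\otimes\sigma_n^B)\{\rho_n^A\otimes\sigma_n^B - e^{na}I_n^{AB}>0\} \;\le\; \Tr\rho_n^A\{\rho_n^A - e^{na}I_n^A>0\},
\end{align*}
valid for every $a\in\R$ and every $n$, and then feed this into the defining equations (\ref{eq:002}), (\ref{eq:003}) and (\ref{eq:005}) of the spectral divergence and entropy rates.

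To establish the pointwise estimate, I would write the spectral decompositions $\rho_n^A=\sum_i p_i^{(n)}\pure{e_i^{(n)}}$ and $\sigma_n^B=\sum_j q_j^{(n)}\pure{f_j^{(n)}}$, so that $\rho_n^A\otimes\sigma_n^B$ is diagonal in the product basis with eigenvalues $p_i^{(n)}q_j^{(n)}$. The spectral projector $\{\rho_n^A\otimes\sigma_n^B - e^{na}I_n^{AB}>0\}$ is then the sum of the rank-one projectors onto those product basis vectors labeled by pairs $(i,j)$ with $p_i^{(n)}q_j^{(n)}>e^{na}$. Since $\sigma_n^B$ is a density operator we have $0\le q_j^{(n)}\le 1$, so $p_iq_j>e^{na}$ forces $p_i>e^{na}$; therefore
\begin{align*}
\sum_{(i,j):\,p_iq_j>e^{na}} p_iq_j \;\le\; \sum_{i:\,p_i>e^{na}} p_i\sum_j q_j \;=\; \sum_{i:\,p_i>e^{na}} p_i,
\end{align*}
which is exactly the right-hand side of the claimed pointwise inequality.

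With that in hand, the two spectral-entropy inequalities are immediate. For (\ref{eq:oh}), any $a$ satisfying $\liminf_n \Tr(\rho_n^A\otimes\sigma_n^B)\{\rho_n^A\otimes\sigma_n^B - e^{na}I_n^{AB}>0\}\ge 1-\ve$ must also satisfy $\liminf_n \Tr\rho_n^A\{\rho_n^A-e^{na}I_n^A>0\}\ge 1-\ve$; taking the supremum over such $a$ gives $\ud(\ve|\whr^A\otimes\whs^B||\whi^{AB})\le\ud(\ve|\whr^A||\whi^A)$, and (\ref{eq:005}) converts this into $\oh(\ve|\whr^A\otimes\whs^B)\ge\oh(\ve|\whr^A)$. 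The argument for (\ref{eq:uh}) is identical after replacing $\liminf\ge 1-\ve$ by $\limsup\le\ve$ and the supremum by the infimum, so that the corresponding inequality for $\od$ transfers.

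I do not expect a real obstacle: the entire content is the elementary observation that tensoring with a density operator cannot shift any eigenvalue above a threshold $e^{na}$ that was not already exceeded by the $A$-marginal. An alternative abstract route via Proposition \ref{prop:monotonicity} applied to the trace-preserving map $X\mapsto X\otimes\sigma_n^B$, combined with $I_n^A\otimes\sigma_n^B\le I_n^{AB}$ and the variational identity in Lemma \ref{lem:np}, is available, but appears to introduce unnecessary machinery compared with the direct spectral computation.
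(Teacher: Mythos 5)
Your proposal is correct and follows essentially the same route as the paper's proof: both diagonalize $\r_n^A$ and $\s_n^B$, use that the eigenvalues of the density operator $\s_n^B$ lie in $[0,1]$ so a product eigenvalue exceeding the threshold $e^{na}$ forces the corresponding $A$-eigenvalue to exceed it, sum out $B$ via normalization, and then push the resulting pointwise trace inequality through the definitions \eqref{eq:002}, \eqref{eq:003}, \eqref{eq:005}. Your phrasing with the explicit threshold $p_iq_j>e^{na}$ is just a rewording of the paper's condition $-\tfrac1n\log\l^A_{n,k}-\tfrac1n\log\l^B_{n,l}\le a$, so no substantive difference.
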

\begin{proof}

From (\ref{eq:002}), (\ref{eq:003}) and (\ref{eq:005}), we have
\begin{align}
\label{eq:vuhl}
&\oh(\ve|\whr^A)=\inf\left\{a\bigm|\liminf_{\ri}\Tr\r_n^A\left\{\r_n^A>e^{-na}I_n\right\}\ge1-\ve\right\},\\
\label{eq:vohl}
&\uh(\ve|\whr^A)=\sup\left\{a\bigm|\limsup_{\ri}\Tr\r_n^A\left\{\r_n^A>e^{-na}I_n\right\}\le\ve\right\},\\
{\text and}\\
\label{eq:vuh}
&\oh(\ve|\whr^A\otimes\whs^B)=\inf\left\{a\bigm|\liminf_{\ri}\Tr(\r_n^A\otimes\s_n^B)\left\{\r_n^A\otimes\s_n^B>e^{-na}I_n\right\}\ge1-\ve\right\},\\
\label{eq:voh}
&\uh(\ve|\whr^A\otimes\whs^B)=\sup\left\{a\bigm|\limsup_{\ri}\Tr(\r_n^A\otimes\s_n^B)\left\{\r_n^A\otimes\s_n^B>e^{-na}I_n\right\}\le\ve\right\}.
\end{align}

Let
\begin{align}
&\r_n^A=\sum_k\l_{n,k}^A\pure{\phi_{n,k}^A},\\
&\s_n^B=\sum_l\l_{n,l}^B\pure{\phi_{n,l}^B}
\end{align}
be spectral decompositions of $\r_n^A$ and $\r_n^B$. Then
(\ref{eq:vuhl}), (\ref{eq:vohl}), (\ref{eq:vuh}) and (\ref{eq:voh}) can be rewritten as,
\begin{align}
\label{eq:sii}
&\oh(\ve|\whr^A)=\inf\bigg\{a\bigm|\liminf_{\ri}\sum_{\substack{k\\-\frac{1}{n}\log\l^A_{n,k}\le a}}{\l_{n,k}^A}\ge1-\ve\bigg\},\\
\label{eq:iis}
&\uh(\ve|\whr^A)=\sup\bigg\{a\bigm|\limsup_{\ri}\sum_{\substack{k\\-\frac{1}{n}\log\l_{n,k}^A \le a}}  \l_{n,k}^A\le\ve\bigg\},\\
\label{eq:si}
&\oh(\ve|\whr^A\otimes\whs^B)=\inf\bigg\{a\bigm|\liminf_{\ri}\sum_{\substack{k,l\\-\frac{1}{n}\log\l^A_{n,k}-\frac{1}{n}\log\l_{n,l}^B\le a}}{\l_{n,k}^A\l_{n,l}^B}\ge1-\ve\bigg\},\\
\label{eq:is}
&\uh(\ve|\whr^A\otimes\whs^B)=\sup\bigg\{a\bigm|\limsup_{\ri}\sum_{\substack{k,l\\-\frac{1}{n}\log\l_{n,k}^A-\frac{1}{n}\log\l_{n,l}^B \le a}}  \l_{n,k}^A\l_{n,l}^B\le\ve\bigg\}.
\end{align}

First, we prove (\ref{eq:oh}).
\begin{align}
\label{eq:110}
1-\ve\le\sum_{\substack{k,l\\-\frac{1}{n}\log\l_{n,k}^A-\frac{1}{n}\log\l_{n,l}^B\le a}} \l_{n,k}^A\l_{n,l}^B\le\sum_{\substack{k,l\\-\frac{1}{n}\log\l_{n,k}^A\le a}} {\l_{n,k}^A\l_{n,l}^B}=\sum_{\substack{ k\\-\frac{1}{n}\log\l_{n,k}^A\le a}}\l_{n,k}^A,
\end{align}
taking \begin{textmath}\liminf_{\ri}\end{textmath} of (\ref{eq:110}) gives,
\begin{align}
\label{eq:111}
1-\ve\le\liminf_{\ri}\sum_{\substack{k,l\\-\frac{1}{n}\log\l_{n,k}^A-\frac{1}{n}\log\l_{n,l}^B\le a}} \l_{n,k}^A\l_{n,l}^B\le\liminf_{\ri}\sum_{\substack{ k\\-\frac{1}{n}\log\l_{n,k}^A\le a}}\l_{n,k}^A,
\end{align}
which implies (\ref{eq:oh}) from (\ref{eq:sii}) and (\ref{eq:si}).

Next, we prove (\ref{eq:uh}).
\begin{align}
\label{eq:112}
\sum_{\substack{k,l\\-\frac{1}{n}\log\l_{n,k}^A-\frac{1}{n}\log\l_{n,l}^B\le a}} \l_{n,k}^A\l_{n,l}^B\le\sum_{\substack{k,l\\-\frac{1}{n}\log\l_{n,k}^A\le a}} {\l_{n,k}^A\l_{n,l}^B}=\sum_{\substack{ k\\-\frac{1}{n}\log\l_{n,k}^A\le a}}\l_{n,k}^A\le\ve,
\end{align}
taking \begin{textmath}\limsup_{\ri}\end{textmath} of (\ref{eq:111}) gives,
\begin{align}
\label{eq:113}
\limsup_{\ri}\sum_{\substack{k,l\\-\frac{1}{n}\log\l_{n,k}^A-\frac{1}{n}\log\l_{n,l}^B\le a}} \l_{n,k}^A\l_{n,l}^B\le\limsup_{\ri}\sum_{\substack{ k\\-\frac{1}{n}\log\l_{n,k}^A\le a}}\l_{n,k}^A\le\ve,
\end{align}
which implies (\ref{eq:uh}) from (\ref{eq:iis}) and (\ref{eq:is}).
\end{proof}
\begin{remark}
This Lemma is pointed out by Bowen-Datta for $\ve=0$(see Corollary 7 in \cite{BD06}).
\end{remark}

\subsection{Description of A LOCC Protocol}\label{sec:LOCC}
Let us consider Alice and Bob engage in a multi-round LOCC protocol.
Without loss of generality, we assume that an LOCC protocol starts with Alice's measurement and end with Alice's operation on her system. Due to the Naimark extension theorem (\!\cite{Naimark40}, see also Theorem 4.5 in \cite{Hayashitext}), such a protocol can in general be described as follows:
\begin{itemize}
\item[1.] Alice performs an isometry operation $V_\gamma:A\rightarrow AE_{A,\gamma}^1E_{A,\gamma}^2$.
\item[2.] Alice performs a projective measurement on $E_{A,\gamma}^1$, and obtains an outcome.
\item[3.] Alice communicates a classical message to Bob.
\item[4.] Bob performs an isometry operation $W_\gamma:B\rightarrow BE_{B,\gamma}^1E_{B,\gamma}^2$.
\item[5.] Bob performs a projective measurement on $E_{B,\gamma}^1$, and obtains an outcome.
\item[6.] Bob communicates a classical message to Alice.
\item[7.] Alice and Bob recursively apply 1$\sim$6 for $\gamma=1,\cdots,\Gamma$, where $\Gamma\in{\mathbb N}$ is the number of rounds of the protocol.
\item[8.] Alice performs an isometry operation $V^*:A\rightarrow AE_A^*$, where $E_A^*$ is an ancillary system.
\item[9.] Alice and Bob discard ancillary systems $E_{A,1}^2\cdots E_{A,\Gamma}^2E_A^*$ and $E_{B,1}^2\cdots E_{B,\Gamma}^2$, respectively.
\end{itemize}

An advantage of introducing such a description is that, if the initial state is pure, the whole state remains pure until the last step in which Alice and Bob discard ancillary systems (Step 9 above).

\subsection{Proof of Theorem ~\ref{thm:converse}}

Theorem~\ref{thm:converse} is proved as follows. Suppose $\whp^{AB}=\{\psi_n^{AB}\}_{n=1}^{\infty}$ can be converted into $\whph^{AB}=\{\phi_n^{AB}\}_{n=1}^{\infty}$ asymptotically by LOCC. By Definition \ref{ac}, there exists a sequence of LOCC $\L_n\:(n=1,2,\cdots)$ such that
\begin{align}
\lim_{\ri}\norm{\L_n(\psi_n^{AB})-\phi_n^{AB}}_1=0.\label{eq:convLnpsi}
\end{align}
From (\ref{eq:fintd}), it leads to
\begin{align}
\lim_{\ri}F(\L_n(\psi_n^{AB}),\phi_n^{AB})=1\label{eq:convLnpsi2}.
\end{align}
 For each $n$, let $\L_n'$ be an LOCC protocol corresponding to Step 1$\sim$8 of $\L_n$ (see Section \ref{sec:LOCC}), and denote ancillary systems $E_{A,1}^2\cdots E_{A,\Gamma}^2E_A^*$ and $E_{B,1}^2\cdots E_{B,\Gamma}^2$ simply by $E_A$ and $E_B$, respectively. Define a pure state $|\phi_n'\rangle^{ABE_AE_B}$ by
\begin{align}
\phi'^{ABE_AE_B}_n=\L_n'(\psi_n^{AB}).\label{eq:phipr}
\end{align}
The final state of the protocol is then given by
\begin{align}
\L_n(\psi_n^{AB})={\rm Tr}_{E_AE_B}[\L_n'(\psi_n^{AB})]={\rm Tr}_{E_AE_B}[\phi'^{ABE_AE_B}_n].\label{eq:dfnphipr}
\end{align}

Due to Uhlmann's theorem \cite{Uhlmann76}, (\ref{eq:convLnpsi2}) and (\ref{eq:dfnphipr}), there exists a sequence of pure states ${\widehat \xi}^{E_AE_B}=\{\xi_n^{E_AE_B}\}_{n=1}^{\infty}$ such that $\phi'^{ABE_AE_B}_n$ is asymptotically equal to $\phi^{AB}_n\otimes\xi_n^{E_AE_B}$, i.e.,
\begin{align}
\lim_{\ri}F(\phi_n'^{ABE_AE_B},\phi^{AB}_n\otimes\xi_n^{E_AE_B})=1,
\end{align}
which implies
\begin{align}
\lim_{\ri}\norm{\phi_n'^{ABE_AE_B}-\phi^{AB}_n\otimes\xi_n^{E_AE_B}}_1=0
\end{align}
from (\ref{eq:fintd}). Due to Corollary \ref{coro}, we have
\begin{eqnarray}
\label{eq:17}
\uh(\ve|{\widehat{\phi'}}^{AE_A})&=&\uh(\ve|\widehat{\phi}^A\otimes\widehat{\xi}^{E_A}),\\
\label{eq:18}
\oh(\ve|{\widehat{\phi'}}^{AE_A})&=&\oh(\ve|\widehat{\phi}^A\otimes\widehat{\xi}^{E_A})
\end{eqnarray}
with $\widehat{\phi'}^{AE_A}=\{\phi'^{AE_A}_n\}_{n=1}^{\infty}$, $\widehat{\phi}^A=\{{\phi_n^A}\}_{n=1}^{\infty}$ and $\widehat{\xi}^{E_A}=\{\xi_n^{E_A}\}_{n=1}^{\infty}$.

From (\ref{eq:phipr}) and Nielsen's theorem (\!\!\cite{N99}, see also proof of Theorem 12.15 in \cite{NC00}), for each $n$, there exists a unital CPTP map
on ${\mathcal H}_n^A$ that maps $\phi'^{AE_A}_n$ to $\psi_n^{A}$. Applying the monotonicity of spectral inf-/sup-entropy rates (Corollary \ref{ru}), we have
\begin{align}
\label{eq:71}
&\uh(\ve|{\widehat{\phi'}}^{AE_A})\le\uh(\ve|{\whp}^A),\\
\label{eq:81}
&\oh(\ve|{\widehat{\phi'}}^{AE_A})\le\oh(\ve|{\whp}^A).
\end{align}
From (\ref{eq:oh}) and (\ref{eq:uh}), we also have
\begin{align}
\label{eq:91}
&\uh(\ve|{\widehat{\phi}}^{A})\le\uh(\ve|\widehat{\phi}^A\otimes\widehat{\xi}^{E_A}),\\
\label{eq:92}
&\oh(\ve|{\widehat{\phi}}^{A})\le\oh(\ve|\widehat{\phi}^A\otimes\widehat{\xi}^{E_A}).
\end{align}
From all above, we obtain
\begin{align}
&\uh(\ve|{\widehat{\phi}}^{A})\le\uh(\ve|{\whp}^A),\\
&\oh(\ve|{\widehat{\phi}}^{A})\le\oh(\ve|{\whp}^A).
\end{align}
\hfill$\square$
\section{Conclusion}
We analyzed asymptotic LOCC convertibility of sequences of bipartite pure entangled states, and derived necessary and sufficient conditions for a sequence to be asymptotically convertible to another. Applying these results, we also provided simple proofs for the optimal rates of entanglement concentration and dilution in an information-spectrum setting.

\section*{Appendix}

In this appendix, we give a proof of lemma \ref{bd}.
\begin{proof}
It obviously holds that
\begin{align}
\Tr(\rho_n-e^{na}\sigma_n)_+
&=\Tr(\rho_n-e^{na}\sigma_n)\{\rho_n-e^{na}\sigma_n>0\} \nn
&\le\Tr\rho_n\{\rho_n-e^{na}\sigma_n>0\}.
\label{eq:29}
\end{align}
Let $\g>0$ be arbitrary and
$a=\uc(\ve|\whr||\whs)-\g$.
From the Definition of $\uc(\ve|\whr||\whs)$, we have
\begin{align*}
\liminf_{n\to\infty}\Tr(\rho_n-e^{na}\sigma_n)_+ \ge 1-\ve.
\label{eq:30}
\end{align*}
Thus, taking \begin{textmath}\liminf_{\ri}\end{textmath} in the both sides of \eqref{eq:29}, we have
\begin{align}
\liminf_{n\to\infty}\Tr\rho_n\{\rho_n-e^{na}\sigma_n>0\}
\ge\liminf_{n\to\infty}\Tr(\rho_n-e^{na}\sigma_n)_+ \ge 1-\ve,
\end{align}
which implies
\begin{align}
\ud(\ve|\whr||\whs) &\ge a=\uc(\ve|\whr||\whs)-\g.
\label{eq:32}
\end{align}
Since $\g>0$ can be arbitrary, we have
\begin{align}
\ud(\ve|\whr||\whs) &\ge\uc(\ve|\whr||\whs).
\label{eq:33}
\end{align}
We show the converse inequality.
For any real number $a$ and $b$, \eqref{eq:4} yields
\begin{align}
\Tr(\rho_n-e^{na}\sigma_n)_+
&\ge\Tr(\rho_n-e^{na}\sigma_n)\{\rho_n-e^{nb}\sigma_n>0\} \nn
&=\Tr\rho_n\{\rho_n-e^{nb}\sigma_n>0\}
-e^{na}\Tr\sigma_n\{\rho_n-e^{nb}\sigma_n>0\} \nn
&\ge\Tr\rho_n\{\rho_n-e^{nb}\sigma_n>0\}
-e^{na}e^{-nb}\Tr\rho_n\{\rho_n-e^{nb}\sigma_n>0\} \nn
&\ge\Tr\rho_n\{\rho_n-e^{nb}\sigma_n>0\}-e^{na}e^{-nb}.
\label{eq:34}
\end{align}
Letting $a=\ud(\whr||\whs)-2\g$ and $b=\ud(\whr||\whs)-\g$ $(\g>0)$, we have
\begin{align}
\liminf_{n\to\infty}\Tr(\rho_n-e^{na}\sigma_n)_+
&\ge\liminf_{n\to\infty}
\left[\Tr\rho_n\{\rho_n-e^{nb}\sigma_n>0\}-e^{-n\g}\right] \\
&=\liminf_{n\to\infty} \Tr\rho_n\{\rho_n-e^{nb}\sigma_n>0\} \\
&\ge 1-\ve,
\label{eq:35}
\end{align}
which implies
\begin{align}
\uc(\ve|\whr||\whs) &\ge a=\ud(\ve|\whr||\whs)-2\g.
\label{eq:36}
\end{align}
Since $\g>0$ can be arbitrary, we have
\begin{align}
\uc(\ve|\whr||\whs) &\ge\ud(\ve|\whr||\whs).
\label{eq:37}
\end{align}
Thus we have (\ref{eq:00}). In the same way, we have (\ref{eq:01}).
\end{proof}

\end{document}